\documentclass{fairmeta}

\usepackage{amsmath, amssymb, amsthm}
\usepackage{multirow}
\usepackage{tikz}
\usetikzlibrary{shapes.geometric}
\definecolor{metablue}{HTML}{0064E0}
\definecolor{metaorange}{HTML}{DC6414}

\usepackage{fancyhdr}
\ifdefined\arxivbuild
  \newcommand{\gitstamp}{}
\else
  \IfFileExists{gitrev.tex}{\newcommand{\gitstamp}{}
}{}
\fi
\providecommand{\gitstamp}{version unknown}
\fancypagestyle{plain}{%
  \fancyhf{}%
  \fancyfoot[C]{\thepage}%
  \fancyfoot[R]{\footnotesize\textcolor{gray}{\gitstamp}}%
}

\newcommand{\ATE}{\mathrm{ATE}}
\newcommand{\Binom}{\mathrm{Binom}}
\newcommand{\mpz}{m\sb{+0}}
\newcommand{\mpo}{m\sb{+1}}
\newcommand{\mpzo}{m\sb{+0}^0}
\newcommand{\mpoo}{m\sb{+1}^0}
\newcommand{\Szz}{S\sb{00}}
\newcommand{\Szo}{S\sb{01}}
\newcommand{\Soz}{S\sb{10}}
\newcommand{\Soo}{S\sb{11}}
\newcommand{\Sjk}{S\sb{jk}}
\newcommand{\mzz}{m\sb{00}}
\newcommand{\mzo}{m\sb{01}}
\newcommand{\moz}{m\sb{10}}
\newcommand{\moo}{m\sb{11}}
\newcommand{\mjk}{m\sb{jk}}
\newcommand{\az}{a\sb 0}
\newcommand{\Az}{A\sb 0}
\newcommand{\Ao}{A\sb 1}
\newcommand{\Hz}{H\sb 0}
\newcommand{\deltaz}{\delta\sb 0}
\newcommand{\Deltaz}{\Delta\sb 0}
\newcommand{\Ho}{H\sb 1}
\newcommand{\pgt}{p\sb{>}}
\newcommand{\plt}{p\sb{<}}
\newcommand{\zalpha}{z\sb{1-\alpha}}
\newcommand{\zalphaq}{z\sb{1-\alpha/4}}
\newcommand{\Lalpha}{L\sb \alpha}
\newcommand{\Ualpha}{U\sb \alpha}
\newcommand{\mop}{m\sb{1+}}
\newcommand{\mzp}{m\sb{0+}}
\newcommand{\znf}{z\sb{0.975}}
\newcommand{\zalphah}{z\sb{1-\alpha/2}}
\newcommand{\pis}{\pi\sb s}
\newcommand{\Sbullet}{S\sb\bullet}

\newtheorem{proposition}{Proposition}
\newtheorem{lemma}{Lemma}

\title{Randomization Inference for Matched Pairs with Binary Outcomes}

\author[1,*]{Bob Wilson}
\affiliation[1]{Marketing Data Science at Meta}
\contribution[*]{Work done at Meta}
\date{August 2026}
\correspondence{Bob Wilson at \email{rwilson4@meta.com}}

\abstract{We give an exact randomization-based confidence set for the
  average treatment effect ($\ATE$) in matched-pair studies with a
  binary outcome, requiring neither monotonicity nor any
  distributional assumption beyond the within-pair coin flip. At its
  core is an \emph{analytic} solution to the worst-case allocation of
  attributable effects: two binomial-symmetry lemmas identify the
  pattern hardest to reject as a single boundary corner, so testing
  null hypotheses needs no integer program and no numerical search.
  Inverting the test via binary search yields a prediction set for the
  attributable effect in $O(\log S)$ Binomial tail calculations; the
  Bonferroni proposition of \citet{rigdon-2015-attributable-effect}
  produces the $\ATE$ confidence set at the same computational cost. The same corner
  extends without further machinery to a sensitivity analysis for
  matched observational studies under Rosenbaum's $\Gamma$-model. A
  simple formula for the \emph{design sensitivity} illuminates when an
  observational study can hope to provide evidence for an effect.}

\begin{document}
\maketitle

\section{Introduction}
Matched-pair designs are used for both experimental and observational
causal inference. Units may be paired as a variance reduction
strategy, with one unit from each pair randomly selected for
treatment. In an observational study, treated units may be matched to
controls on the basis of observed covariates to reduce selection bias.
Inference about a treatment effect in such a design ought to rest on
no more than the structure imposed by the assignment mechanism,
whether imposed by the experiment or assumed in the matching.

When the hypothesis of constant treatment effects is reasonable, we
may use the Hodges-Lehmann method to obtain a point estimate
\citep{hodges-1963-point-estimates, rosenbaum-1993-hodges-lehmann-point-estimates}.
For a binary outcome that hypothesis is implausible, since a
constant nonzero effect forces every untreated unit to a single common counterfactual value,
incompatible with anything but a perfectly homogeneous population.
\citet{rosenbaum-2001-effects-attributable-to-treatment} introduced
the notion of an \emph{attributable effect} as an effect size
identified by randomization. It is the number of successes caused by
treatment among the treated, $\Ao$, and its mirror $\Az$ for the
untreated, the number of control failures that would have been
successes under treatment.

\citet{rosenbaum-2001-effects-attributable-to-treatment} built exact
prediction sets for $\Ao$ in the unstratified case by inverting
Fisher's exact test on a discrete pivot, under the \emph{monotonicity
assumption} that treatment never prevents a success.
\citet{rosenbaum-2002-attributing-effects-to-treatment-in-matched-studies}
extended attributable effects to matched observational studies, giving
exact randomization inference together with a large-sample
approximation via asymptotic separability, both under monotonicity.
\citet{rigdon-2015-attributable-effect} dropped monotonicity in the
unstratified randomized experiment and combined $\Ao$ with $\Az$ via
Bonferroni into a confidence set for the average treatment effect;
their procedure is fast because each prediction set inverts a single
hypergeometric. They consider stratified problems (including matched
pairs) only briefly, but propose a generic integer programming
solution impractical in large sample sizes.

The computational difficulty of exact randomization inference for
binary outcomes has drawn recent attention.
\citet{aronow-2025-fast-computation} reduce the computational cost of the
\emph{unstratified} interval of \citet{rigdon-2015-attributable-effect}
to $O(n \log n)$ permutation tests in balanced designs, but leave the
stratified case explicitly open. \citet{li-2025-exact-stratified}
treat the stratified binary $\ATE$ directly, obtaining exact and
conservative intervals by maximizing a permutation $p$-value over the
compatible potential-outcome tables through a pruned search, at computational cost
polynomial in the stratum sizes and with no closed form for the worst
case. \citet{zhang-2026-fast-algorithms} gives an $O(\log n)$
matched-pair $\ATE$ interval, also locating the worst case at a
boundary corner; there, however, the corner is found for each pair
stratum by solving a fixed-dimension integer program with a generic
solver, and the construction targets the $\ATE$ alone. None of these
address hidden bias.

This note gives an \emph{analytic} solution to the worst-case
allocation for matched pairs, and builds the full attributable-effect
and sensitivity machinery around it. Our contributions are:
\begin{enumerate}
  \item \textbf{A closed-form worst-case pattern.} Two
    binomial-symmetry lemmas (\S\ref{sec:hardest}) identify the
    pattern of treatment effects hardest to reject among the
    combinatorial number of patterns consistent with a hypothesized
    $\Ao = \az$, as the unique boundary corner of a two-dimensional
    integer rectangle. The worst-case test is a single Binomial tail
    calculation, with no integer program and no numerical search.
  \item \textbf{A point estimate for the $\ATE$} matching the textbook
    McNemar formula, but motivated by random assignment rather than
    sampling arguments.
  \item \textbf{An $\,O(\log S) \,$ $\ATE$ confidence set without
    monotonicity.} Sweeping $\az$ inverts the test in $O(\log S)$
    Binomial tail calculations; the mirror procedure handles $\Az$,
    and the Bonferroni combination of
    \citet{rigdon-2015-attributable-effect} delivers the $\ATE$ set at
    the same computational cost, assuming no more than the within-pair coin flip. A
    continuity-corrected Gaussian approximation
    (\S\ref{sec:practical}) delivers the endpoints and the $\ATE$
    interval in closed form.
  \item \textbf{A sensitivity analysis.} The construction extends to
    matched observational studies via Rosenbaum's $\Gamma$-model at no
    additional computational cost (\S\ref{sec:sensitivity}), including
    adjusted p-values, closed-form bounds on the Hodges-Lehmann point
    estimate, and expanded confidence intervals.
  \item \textbf{A simple formula for the design sensitivity of a
    matched pairs study.} The design sensitivity tells us the
    threshold of ``hidden biases'' beyond which the null hypothesis of
    no net treatment effect cannot be rejected at any sample size,
    playing a similar role in observational studies that power
    analysis plays in experiment design.
\end{enumerate}

\S\ref{sec:setting} sets up the matched-pair design, the random
attributable effects, and the prediction-set framing.
\S\ref{sec:nulls} reduces the testing problem to a family of
equivalence classes via McNemar's test. \S\ref{sec:hardest} proves two
monotonicity lemmas and identifies the single hardest pattern.
\S\ref{sec:practical} inverts the test by binary search, provides a
large-sample approximation, and proves the standard point estimate may
be motivated by random assignment alone. \S\ref{sec:ate} combines
$\Ao$ and $\Az$ into the $\ATE$ confidence set and compares with
sampling-based sets. \S\ref{sec:sensitivity} extends the procedure to
observational studies via Rosenbaum's sensitivity model.
\S\ref{sec:discussion} summarizes the main takeaways.

\section{Setting, target, and the randomization test}\label{sec:setting}
Suppose we have $2S$ units arranged in $S$ pairs, where units within a
pair are matched on covariates measured before the study. In an
experiment, one unit is selected for treatment uniformly at random
within each pair. In an observational study, each treated unit is paired with a
control unit, and we assume for now that within-pair selection
probabilities are equal and strictly between 0 and 1 (\emph{strongly
ignorable treatment selection}); \S\ref{sec:sensitivity} relaxes this.

Let $Z_{si} = 1$ if unit $i$ in pair $s$ is treated and $0$ otherwise,
with $\sum_{i=1}^2 Z_{si} = 1$. We assume the outcome experienced by a
unit depends on that unit's treatment selection, but not the selection
made by any other unit, the Stable Unit Treatment Value Assumption \citep{rubin-1980-sutva}. As a result,
each unit has potential outcomes $\{r_{Csi}, r_{Tsi}\} \in \{0, 1\}$,
and the observed outcome is $R_{si} = Z_{si} r_{Tsi} + (1 - Z_{si}) r_{Csi}$.
Let $R_{si}' = (1 - Z_{si}) r_{Tsi} + Z_{si} r_{Csi}$ denote the
counterfactual outcome. Call an outcome of $1$ a \emph{success} and
$0$ a \emph{failure}.

Define the treatment effect $\tau_{si} := r_{Tsi} - r_{Csi} \in \{-1, 0, 1\}$.
The average treatment effect over all $2S$ units is
\begin{displaymath}
   \ATE := \frac{1}{2S} \sum_{s=1}^S \sum_{i=1}^2 \tau_{si}.
\end{displaymath}
The vector $\tau$ is a fixed feature of the finite population. The
randomness in the problem comes solely from the treatment assignment
or selection mechanism that determines $Z$.

It is convenient to split the $\ATE$ into two random pieces, the
\emph{attributable effects} of
\citet{rosenbaum-2001-effects-attributable-to-treatment, rosenbaum-2002-attributing-effects-to-treatment-in-matched-studies}
and \citet{rigdon-2015-attributable-effect}:
\begin{displaymath}
   \Ao(Z, \tau) := \sum_{s, i} Z_{si} \cdot \tau_{si},
   \qquad
   \Az(Z, \tau) := \sum_{s, i} (1 - Z_{si}) \cdot \tau_{si}.
\end{displaymath}
$\Ao$ is the \emph{net treatment effect} among treated units: the
number of successes observed among treated units the treatment
\emph{caused} minus the number of successes the treatment
\emph{prevented}. $\Az$ is the corresponding quantity for the
untreated. Both are random quantities because they depend on the
treatment assignments, $Z$; however,
\begin{equation}\label{eqn:sum-constraint}
   \Ao(Z, \tau) + \Az(Z, \tau) = \sum_{s, i} \tau_{si} = 2S \cdot \ATE,
\end{equation}
a non-random quantity. Following Rigdon and Hudgens, we call the sets
we construct for $\Ao$ and $\Az$ \emph{prediction sets}, not
confidence sets, because the targets are random. As a parameter, the
$\ATE$ has a confidence set we will build from the two prediction sets
in \S\ref{sec:ate}.

Routine algebra rewrites $\Ao$ in terms of observed and counterfactual
outcomes:
\begin{displaymath}
   \Ao = \sum_{s, i} Z_{si} \cdot R_{si} \cdot (1 - R_{si}')
       - \sum_{s, i} Z_{si} \cdot R_{si}' \cdot (1 - R_{si}).
\end{displaymath}
The non-zero contributions to the first term are the treated units
with observed successes ($R_{si} = 1$) that would have been failures
had they not been treated ($R_{si}' = 0$); these are successes
\emph{caused} by treatment. The second term counts treated successes
\emph{prevented} by treatment (observed failures whose counterfactual
was a success). Under the \emph{monotonicity assumption}
($\tau_{si} \geq 0$ for all units), the second term is zero. The
monotonicity assumption is plausible in many circumstances, but is
unnecessary for inferences regarding net treatment effects.

Inference regarding the effect of treatment on the treated units
involves speculation about the counterfactual (control) outcomes for
these units. We do not need to speculate about the treated outcomes
for these units, because we observe them. Thus, inference about $\Ao$
focuses on $r_C$. Similarly, inference about $\Az$ focuses on $r_T$.
We provide detailed derivations only for $\Ao$ to keep the discussion
efficient; parallel arguments apply to $\Az$.

A sharp null hypothesis specifies a pattern of treatment effects,
$ \{ \tau_{si} \}, $ and implies the counterfactual response for each
unit. The data rule out many sharp nulls $\tau^0$ before any
formal test. If a treated unit has $R_{si} = 0$, then $r_{Tsi} = 0$
and so $\tau_{si} \neq 1$. The general statement is
\begin{equation}\label{eqn:compatible}
   \tau_{si} \in \{(2Z_{si} - 1) \cdot R_{si},\;
                   (2Z_{si} - 1) \cdot (R_{si} - 1)\}
\end{equation}
for each unit. The set in~(\ref{eqn:compatible}) is \emph{observed},
so we can immediately reject certain patterns of treatment effects as
incompatible with the data. If~(\ref{eqn:compatible}) holds for the
pattern asserted by a particular sharp null hypothesis, call it a
\emph{compatible} hypothesis. There are at most $2^{2S}$ compatible
sharp null hypotheses.

When testing hypotheses regarding $\Ao$, two compatible sharp null
hypotheses specifying the same pattern of effects for the treated
units but different patterns for the control units yield the same
pattern of control potential outcomes. Any test based on $r_C$ yields
the same inference for two such hypotheses. This reduces the number of
distinct tests to $2^S.$

\paragraph{McNemar's test and equivalence classes.}
We use McNemar's test for matched pairs with binary outcomes. Consider
the pairs of control potential outcomes $(r_{Cs1}, r_{Cs2})$ (which
includes the counterfactual outcomes for treated units, as implied by
the null hypothesis). Some of these pairs consist of two successes or
two failures; these are \emph{concordant} pairs. Other pairs consist
of one success and one failure, the \emph{discordant} pairs. In an
experiment (or a strongly ignorable observational study), the unit
selected for treatment from a discordant pair is equally likely to
have a success or failure as the control outcome. The number of
treated units among discordant pairs with a success as the control
outcome thus has a $\Binom(S^\ast, \tfrac{1}{2})$ distribution, where
$S^\ast$ is the number of discordant pairs. McNemar's test rejects
when the number of such units is too large or too small compared to
this reference distribution.

Categorize pairs by their pattern of observed outcomes. Let $\Sjk$
count pairs whose treated unit has $R = j$ and whose control unit has
$R = k$, for $j, k \in \{0, 1\}$. Table~\ref{tbl:treatment-concordant}
records, for each pattern, the value of $\tau_{si}^0$ for the treated
unit that would make the pair discordant according to the test
statistic. For example, in a pair having a treated success and a
control failure, if the null hypothesis asserts no treatment effect
for the treated unit, its counterfactual outcome would still have been
a success, and we would consider the pair discordant.

\begin{table}[ht]
\centering
\caption{Hypothesized treatment effect on the treated unit
that renders a pair concordant or discordant under the implied
control outcomes.}
\label{tbl:treatment-concordant}
\begin{tabular}{cccccl}
\toprule
$R_{si_T}$ & $R_{si_C}$ & Concordant & Discordant & Pairs \\
\midrule
0 & 0 & $0$  & $-1$ & $\Szz$ \\
0 & 1 & $-1$ & $0$  & $\Szo$ \\
1 & 0 & $1$  & $0$  & $\Soz$ \\
1 & 1 & $0$  & $1$  & $\Soo$ \\
\bottomrule
\end{tabular}
\end{table}

For a hypothesized $\tau^0$, let $\mjk \in \{0, \ldots, \Sjk\}$
count the pairs of category $(j,k)$ the test statistic deems
discordant. The total number of discordant pairs is
$S^\ast = \sum_{j,k} \mjk$, and the number of those discordant pairs
in which the treated unit is the success-under-control is
$\mpz := \mzz + \moz$. Symmetrically, the number whose control
unit is the success-under-control is $\mpo := \mzo + \moo$.

McNemar's statistic depends on $\tau^0$ only through $(\mpz, \mpo)$:
under the null hypothesis,
\begin{displaymath}
  \mpz \sim \Binom(\mpz + \mpo, \tfrac{1}{2}).
\end{displaymath}
Two null hypotheses with the same $(\mpz, \mpo)$ lead to the same
inference. The number of effectively distinct inferences is therefore
\begin{equation}\label{eqn:eq-count}
   (\Szz + \Soz + 1)\cdot(\Szo + \Soo + 1),
\end{equation}
quadratic in $S$. The sharp null hypotheses thus fall into
\emph{equivalence classes} indexed by $(\mpz, \mpo)$: two sharp nulls
in the same equivalence class are either both rejected, or both
retained.

\paragraph{The role of monotonicity}
When the monotonicity assumption holds, it implies $ \mzz = 0 $
and $ \mzo = \Szo, $ which in turn implies there are only
$(\Soz + 1) \cdot (\Soo + 1)$ distinct inferences.

\section{Composite nulls and consistency constraints}\label{sec:nulls}
Often we are less interested in sharp null hypotheses than composite
nulls regarding net treatment effects:
\begin{itemize}
  \item $\Hz: \Ao \leq \az$ vs $\Ho: \Ao > \az$
  \item $\Hz^\prime: \Ao \geq \az$ vs $\Ho^\prime: \Ao < \az$
  \item $\Hz^{\dagger}: \Ao = \az$ vs $\Ho^{\dagger}: \Ao \neq \az$
\end{itemize}
The first two hypotheses are one-sided; the two-sided third hypothesis
may be tested by performing the two one-sided tests at half the
nominal level to adjust for multiple comparisons
\citep[\S2.2]{rosenbaum-2015-two-r-packages}. Subsequent discussion
will focus on $\Hz$ but the extension to other hypotheses is
straightforward.

\paragraph{Bounds on the inferential target.}
Let $f_T$ and $s_T$ denote the numbers of failures and successes
among the treated units. We know a priori $-f_T \leq \Ao \leq s_T$.
The two extremes correspond to the scenario where every treated failure
was caused by treatment and no successes were, and to the scenario
where every treated success was caused by treatment and no failures
were. This allows us to reject with logical certainty hypotheses
$\Hz: \Ao \leq \az$ for $ \az < -f_T $ or $\Hz^\prime: \Ao \geq \az$
for $ \az > s_T. $ Other hypotheses we know to be true a priori:
$\Hz: \Ao \leq \az$ for $ \az \geq s_T $ or $\Hz^\prime: \Ao \geq \az$
for $ \az \leq -f_T. $ We restrict the range of
nontrivial $\az$ to those in $[-f_T, s_T)$ for $ \Hz $
and $\az \in (-f_T, s_T]$ for $ \Hz^\prime $.

\paragraph{Consistency constraints.}
The realized $\Ao$ for a hypothesized $\tau^0$ has the closed form
\begin{align}
  \Ao &= \Soz - \moz + \moo - \mzz - \Szo + \mzo \nonumber \\
       &= \Soz - \Szo + \mpo - \mpz. \nonumber
\end{align}
A pattern $(\mpzo, \mpoo)$ is \emph{consistent} with $\Hz: \Ao \leq \az$
if and only if
\begin{displaymath}
   \mpzo - \mpoo \geq \Soz - \Szo - \az =: \deltaz,
\end{displaymath}
subject to the box constraints
$0 \leq \mpzo \leq \Szz + \Soz$ and
$0 \leq \mpoo \leq \Szo + \Soo$. When monotonicity holds, the box
constraints become $0 \leq \mpzo \leq \Soz$ and
$\Szo \leq \mpoo \leq \Szo + \Soo$. We reject a composite null
only if we reject every sharp null compatible with it; this is the
dictum of \citet[pg. 290]{rosenbaum-2019-observation-experiment} that
``to reject a composite hypothesis is to reject each and every way it
may be true.'' Rejecting $\Hz$ requires rejecting every consistent
$(\mpzo, \mpoo)$.

For moderate $S$, it would be feasible to check each pattern $(\mpz,
\mpo)$ corresponding to the composite null. We reject the composite
if and only if we reject each such pattern. In the next section, we
identify a single pattern hardest to reject. If we reject this
pattern, we know all other consistent patterns, and thus the composite
hypothesis, are also rejected.

\paragraph{Example} Suppose we have a controlled experiment involving
$S=1000$ pairs, with one unit in each pair selected at random for
treatment. In $\Szz = 800$ of the pairs, neither unit experienced a
successful outcome; in $\Soo = 100$ of the pairs, both units did. In
$\Soz = 70$ of the pairs, the treated unit experienced a success
while the control unit did not. In the remaining $\Szo = 30$ pairs,
the control unit experienced a success while the treated unit did not.
We observe the $ \Sjk $ counts.

Suppose we wish to test $\Hz: \Ao \leq 0, $ with $\az = 0$. Rejecting
this null hypothesis would constitute evidence the net treatment
effect was positive. Then $ \deltaz = 70 - 30 - 0 = 40. $ A hypothesis
$ \tau^0 $ corresponds to a claim about how many of the $ \Sjk $ pairs
would be discordant if we observed the control outcomes for treated
units. A particular $ \tau^0 $ might correspond to $ \mzz = 400, $
$ \moo = 50, $ $ \moz = 35, $ and $ \mzo = 15; $ that is, half the
pairs in each category would be discordant according to the control
outcomes. Then $ \mpz = 435 $ and $ \mpo = 65. $ This pattern is
consistent with $ \Hz, $ since the realized $\Ao = -330 \leq 0$
(equivalently, since $ 435 - 65 \geq 40 $) and so it is one of the
patterns we must test if we hope to reject $ \Hz. $

\begin{figure}[t]
\centering
\begin{minipage}[t]{0.48\linewidth}
\centering
\begin{tikzpicture}[>=stealth]
  \draw[->,thick] (0,0) -- (8.9,0) node[anchor=south east,font=\small] {$m_{+0}^0$};
  \draw[->,thick] (0,0) -- (0,4.7) node[anchor=south west,font=\small] {$m_{+1}^0$};
  \fill[metablue,opacity=0.18] (0.3678,0) -- (8.0,0) -- (8.0,4.0) -- (1.5632,4.0) -- cycle;
  \draw[dashed,thick] (0,0) rectangle (8.0,4.0);
  \draw[thick] (0.3678,0) -- (1.5632,4.0);
  \draw[thick] (8.0,0) -- (8.0,-0.1) node[anchor=north,align=center,font=\small]
      {$870$\\[-2pt]{\scriptsize$S_{00}{+}S_{10}$}};
  \draw[thick] (0,4.0) -- (-0.1,4.0) node[anchor=east,align=right,font=\small]
      {$130$\\[-2pt]{\scriptsize$S_{01}{+}S_{11}$}};
  \draw[thick] (0.3678,0) -- (0.3678,-0.1) node[anchor=north,align=center,font=\small]
      {$40$\\[-2pt]{\scriptsize$\delta_0$}};
  \node[star,star points=5,star point ratio=2.25,minimum size=10pt,
        fill=metaorange,draw=black,line width=0.3pt] at (1.5632,4.0) {};
\end{tikzpicture}\\[4pt]
{\small (a) No monotonicity assumption}
\end{minipage}\hfill
\begin{minipage}[t]{0.48\linewidth}
\centering
\begin{tikzpicture}[>=stealth]
  \draw[->,thick] (0,0) -- (4.3,0) node[anchor=south east,font=\small] {$m_{+0}^0$};
  \draw[->,thick] (0,0) -- (0,4.7) node[anchor=south west,font=\small] {$m_{+1}^0$};
  \draw[dashed,thick] (0,0.9231) rectangle (3.5,4.0);
  \draw[thick] (2.0,0) -- (3.5,0.9231);
  \draw[thick] (3.5,0) -- (3.5,-0.1) node[anchor=north,align=center,font=\small]
      {$70$\\[-2pt]{\scriptsize$S_{10}$}};
  \draw[thick] (2.0,0) -- (2.0,-0.1) node[anchor=north,font=\small] {$\delta_0$};
  \draw[thick] (0,0.9231) -- (-0.1,0.9231) node[anchor=east,align=right,font=\small]
      {$30$\\[-2pt]{\scriptsize$S_{01}$}};
  \draw[thick] (0,4.0) -- (-0.1,4.0) node[anchor=east,font=\small] {$130$};
  \node[star,star points=5,star point ratio=2.25,minimum size=10pt,
        fill=metaorange,draw=black,line width=0.3pt] at (3.5,0.9231) {};
  \node[anchor=south east,font=\small,inner sep=3pt] at (3.4,1.05) {$(70,30)$};
  \node[align=center,font=\scriptsize,text width=3.2cm] at (1.55,2.7)
      {consistency region\\ collapses to the\\ single point $(70,30)$};
\end{tikzpicture}\\[4pt]
{\small (b) Under monotonicity}
\end{minipage}
\caption{The set of patterns $(\mpzo, \mpoo)$ consistent with the
  composite null $\Hz: \Ao \leq 0$ in the example, where
  $\deltaz = \Soz - \Szo - \az = 40$. (a) Without the monotonicity
  assumption the consistent patterns form the shaded trapezoid,
  bounded to the left by the consistency line
  $\mpoo = \mpzo - \deltaz$ and on the other sides by the box
  constraints. We reject the composite null only if we reject every
  pattern in the region. (b) Under monotonicity the box shrinks to
  $\mpzo \in [0, \Soz]$ and $\mpoo \in [\Szo, \Szo + \Soo]$, and the
  consistent region collapses to the single point $(70, 30)$.
  \S\ref{sec:hardest} discusses the starred pattern.}
\label{fig:consistency}
\end{figure}

Although 100 of the pairs are discordant according to the
\emph{observed} responses, McNemar uses the implied control outcomes
as the basis for the test. In this example, the null hypothesis
corresponds to $S^\ast = 500$ discordant pairs, and the null
hypothesis would have us believe 435 of these pairs have the
treated unit as the success-under-control. Since
$ \Pr\{\Binom(500, \tfrac{1}{2}) \geq 435 \} \approx 1.52 \times 10^{-68}$,
McNemar's test would reject this specific pattern. Yet there are many
other patterns $(\mpzo, \mpoo)$ consistent with $ \Hz: \Ao \leq 0, $
so rejecting this particular pattern is just the first step. The full
consistent set is the shaded trapezoid of
Figure~\ref{fig:consistency}a.

\paragraph{The role of monotonicity}
Under the monotonicity assumption, $ \mzz = 0 $ and
$ \mzo = \Szo $ (\S\ref{sec:setting}), so the consistent patterns
are confined to $ \mpzo \in [0, \Soz] $ and
$ \mpoo \in [\Szo, \Szo + \Soo]. $ Therefore,
$ \mpzo - \mpoo \leq \Soz - \Szo.$ For the null $\Hz: \Ao \leq 0$
(where $ \deltaz = \Soz - \Szo $) the consistency constraint is
$\mpzo - \mpoo \geq \Soz - \Szo, $ so the only possibility is
$ (\mpzo, \mpoo) = (\Soz, \Szo) $ (Figure~\ref{fig:consistency}b).
This is the textbook application of McNemar's test, which typically
considers only the sharp null hypothesis of no treatment effect for
any unit \citep{lehmann-1975-nonparametrics}. In general, the
consistent region may include more than one pattern we must test, but
the next section shows we can identify the single hardest pattern to
reject.

\section{The single hardest pattern}\label{sec:hardest}
For a consistent $\tau^0$, the McNemar test statistic is $\mpzo$,
and its null distribution under strongly ignorable treatment selection is
$\Binom(\mpzo + \mpoo, \tfrac{1}{2})$. Consistency forces $\mpzo$
above the binomial mean (when $\deltaz > 0$), so the natural test is
right-tailed:
\begin{equation}\label{eqn:pgreater}
   \pgt(a, b) := \Pr\{\Binom(a + b, \tfrac{1}{2}) \geq a\},
   \qquad a := \mpzo,\ b := \mpoo.
\end{equation}
We reject $\Hz$ if and only if
$\sup\{\pgt(a, b) : (a, b) \, \textrm{consistent} \} \leq \alpha$.

The supremum admits a closed form. Two monotonicity properties of
$\pgt$ identify the pattern hardest to reject.

\begin{lemma}[Pointwise monotonicity in $a$]\label{lem:mono-a}
For each fixed $b$ and $a \geq 0$,
$\pgt(a + 1, b) < \pgt(a, b)$.
\end{lemma}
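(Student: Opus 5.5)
We need to show that $\Pr\{\Binom(a+b+1,\tfrac12)\ge a+1\} < \Pr\{\Binom(a+b,\tfrac12)\ge a\}$. Increasing $a$ by one raises both the number of trials and the threshold. The plan is to couple the two binomials by writing $X' = X + \xi$, where $X\sim\Binom(a+b,\tfrac12)$ and $\xi\sim\mathrm{Bernoulli}(\tfrac12)$ is independent of $X$. Then $X'\sim\Binom(a+b+1,\tfrac12)$.

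The event $\{X'\ge a+1\}$ is contained in $\{X\ge a\}$, because $X' \le X+1$. Hence $\pgt(a+1,b)\le\pgt(a,b)$, and the difference is
\begin{displaymath}
\pgt(a,b)-\pgt(a+1,b) = \Pr\{X\ge a,\ X+\xi\le a\} = \Pr\{X=a\}\Pr\{\xi=0\} = \tfrac12\binom{a+b}{a}2^{-(a+b)}.
\end{displaymath}

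For strictness we only need this probability to be positive. That holds because $0\le a\le a+b$ for all $a\ge0$ and $b\ge0$, so $\binom{a+b}{a}\ge1$. The edge case $a+b=0$ is covered as well: there $\pgt(0,0)=1$ and $\pgt(1,0)=\tfrac12$.

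As a cross-check, one could instead use the Pascal identity $\Pr\{\Binom(n+1,\tfrac12)\ge k+1\} = \tfrac12\Pr\{\Binom(n,\tfrac12)\ge k\}+\tfrac12\Pr\{\Binom(n,\tfrac12)\ge k+1\}$ with $n=a+b$ and $k=a$. This gives the same difference of $\tfrac12\Pr\{\Binom(n,\tfrac12)=a\}$.

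There is no real obstacle here. The only care needed is confirming that the point mass at $a$ is strictly positive, which uses $b\ge 0$, so that $a$ lies in the support $\{0,\ldots,a+b\}$.
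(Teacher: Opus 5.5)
Your proof is correct and follows essentially the paper's argument. The paper uses the same coupling $B_3 = B_1 + B_2$ with an independent $\mathrm{Bernoulli}(\tfrac12)$ summand, conditions on $B_2$, and obtains $\pgt(a+1,b) = \pgt(a,b) - \tfrac12\Pr\{B_1 = a\}$ — the same difference you compute via event containment — and then concludes strictness from positivity of the point mass at $a$, just as you do.
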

The practical consequence of Lemma~\ref{lem:mono-a} is that the
pattern hardest to reject lies on the left boundary of the consistent
region as illustrated in Figure~\ref{fig:consistency}.

\begin{proof}
Suppose $ B_1 \sim \Binom(a+b, \tfrac{1}{2}) $ and $ B_2 \sim \mathrm{Bernoulli}(\tfrac{1}{2}). $
Then $ B_3 = B_1 + B_2 $ has a $ \Binom(a+b+1, \tfrac{1}{2}) $
distribution. By iterated expectations,
\begin{align}
  \pgt(a + 1, b) &= \Pr\{ B_3 \geq a + 1\} \nonumber \\
  &= \Pr\{ B_1 + B_2 \geq a + 1 \mid B_2 = 0\} \cdot \Pr\{B_2 = 0\} \nonumber \\
  &\phantom{=} \hspace{10pt} + \Pr\{ B_1 + B_2 \geq a + 1 \mid B_2 = 1\} \cdot \Pr\{B_2 = 1\} \nonumber \\
  &= \tfrac{1}{2} \cdot \Pr\{ B_1 \geq a + 1 \} + \tfrac{1}{2} \cdot \Pr\{ B_1 \geq a \} \nonumber \\
  &= \tfrac{1}{2} \cdot \left[ \Pr\{ B_1 \geq a \} - \Pr\{ B_1 = a \} \right] + \tfrac{1}{2} \cdot \Pr\{ B_1 \geq a \} \nonumber \\
  &= \Pr\{ B_1 \geq a \} - \tfrac{1}{2} \Pr\{ B_1 = a \} \nonumber \\
  &= \pgt(a, b) - \tfrac{1}{2}\Pr\{ B_1 = a\} \nonumber \\
  &< \pgt(a, b), \nonumber
\end{align}
since the binomial point mass at $a$ is strictly positive. \qedhere
\end{proof}

\paragraph{Example, continued.} Fix the $\mpo = 65$ slice. The
example pattern sits at $\mpz = 435$, with
$\pgt(435, 65) \approx 1.52\times 10^{-68}$. Stepping one unit to
the right gives $\pgt(436, 65) \approx 8.73\times 10^{-69}$, just
under half the previous value, in agreement with
Lemma~\ref{lem:mono-a}. The full slice runs over
$\mpz \in [105, 870]$: the lower endpoint is the consistency floor
$\mpo + \deltaz = 65 + 40$; the upper endpoint is the box-constraint
ceiling $\Szz + \Soz = 870$, above which the pattern is
incompatible with the observed counts. Figure~\ref{fig:lemma1} plots
$\pgt(\mpz, 65)$ over this range on a log scale.

\begin{figure}[ht]
\centering
\begin{tikzpicture}
  \draw[thick] (0,0) rectangle (10.5,5.0);
  \draw[gray!25, very thin] (0,5.0000) -- (10.5,5.0000);
  \draw[thick] (0,5.0000) -- (-0.10,5.0000) node[anchor=east,font=\small] {$10^{0}$};
  \draw[gray!25, very thin] (0,3.7500) -- (10.5,3.7500);
  \draw[thick] (0,3.7500) -- (-0.10,3.7500) node[anchor=east,font=\small] {$10^{-50}$};
  \draw[gray!25, very thin] (0,2.5000) -- (10.5,2.5000);
  \draw[thick] (0,2.5000) -- (-0.10,2.5000) node[anchor=east,font=\small] {$10^{-100}$};
  \draw[gray!25, very thin] (0,1.2500) -- (10.5,1.2500);
  \draw[thick] (0,1.2500) -- (-0.10,1.2500) node[anchor=east,font=\small] {$10^{-150}$};
  \draw[gray!25, very thin] (0,0.0000) -- (10.5,0.0000);
  \draw[thick] (0,0.0000) -- (-0.10,0.0000) node[anchor=east,font=\small] {$10^{-200}$};
  \draw[thick] (0.0677,0) -- (0.0677,-0.10) node[anchor=north,font=\small] {$105$};
  \draw[thick] (1.3548,0) -- (1.3548,-0.10) node[anchor=north,font=\small] {$200$};
  \draw[thick] (4.0645,0) -- (4.0645,-0.10) node[anchor=north,font=\small] {$400$};
  \draw[thick] (6.7742,0) -- (6.7742,-0.10) node[anchor=north,font=\small] {$600$};
  \draw[thick] (10.4323,0) -- (10.4323,-0.10) node[anchor=north,font=\small] {$870$};
  \node[anchor=north,font=\small] at (5.25, -0.65) {$\mpz$ (with $\mpo = 65$)};
  \node[anchor=south,rotate=90,font=\small] at (-1.15, 2.5) {$p_{>}(\mpz, 65)$};
  \draw[metaorange, dashed, thick] (0,4.9675) -- (10.5,4.9675);
  \node[metaorange, anchor=south east, font=\small, inner sep=2pt] at (10.45,4.9675) {$\alpha=0.05$};
  \draw[metablue, thick] (0.0677,4.9281) -- (0.1084,4.9206) -- (0.1490,4.9128) -- (0.1897,4.9047) -- (0.2303,4.8963) -- (0.2710,4.8876) -- (0.3116,4.8787) -- (0.3523,4.8695) -- (0.3929,4.8601) -- (0.4335,4.8504) -- (0.4742,4.8405) -- (0.5148,4.8304) -- (0.5555,4.8200) -- (0.5961,4.8095) -- (0.6368,4.7987) -- (0.6774,4.7878) -- (0.7181,4.7766) -- (0.7587,4.7653) -- (0.7994,4.7538) -- (0.8400,4.7422) -- (0.8806,4.7304) -- (0.9213,4.7184) -- (0.9619,4.7062) -- (1.0026,4.6939) -- (1.0432,4.6815) -- (1.0839,4.6689) -- (1.1245,4.6562) -- (1.1652,4.6433) -- (1.2058,4.6303) -- (1.2465,4.6172) -- (1.2871,4.6040) -- (1.3277,4.5906) -- (1.3684,4.5771) -- (1.4090,4.5635) -- (1.4497,4.5498) -- (1.4903,4.5360) -- (1.5310,4.5220) -- (1.5716,4.5080) -- (1.6123,4.4939) -- (1.6529,4.4796) -- (1.6935,4.4653) -- (1.7342,4.4508) -- (1.7748,4.4363) -- (1.8155,4.4217) -- (1.8561,4.4070) -- (1.8968,4.3922) -- (1.9374,4.3773) -- (1.9781,4.3624) -- (2.0187,4.3473) -- (2.0594,4.3322) -- (2.1000,4.3170) -- (2.1406,4.3017) -- (2.1813,4.2864) -- (2.2219,4.2710) -- (2.2626,4.2555) -- (2.3032,4.2399) -- (2.3439,4.2243) -- (2.3845,4.2086) -- (2.4252,4.1928) -- (2.4658,4.1770) -- (2.5065,4.1611) -- (2.5471,4.1451) -- (2.5877,4.1291) -- (2.6284,4.1130) -- (2.6690,4.0969) -- (2.7097,4.0807) -- (2.7503,4.0644) -- (2.7910,4.0481) -- (2.8316,4.0317) -- (2.8723,4.0153) -- (2.9129,3.9988) -- (2.9535,3.9823) -- (2.9942,3.9657) -- (3.0348,3.9491) -- (3.0755,3.9324) -- (3.1161,3.9157) -- (3.1568,3.8989) -- (3.1974,3.8821) -- (3.2381,3.8652) -- (3.2787,3.8483) -- (3.3194,3.8313) -- (3.3600,3.8143) -- (3.4006,3.7973) -- (3.4413,3.7802) -- (3.4819,3.7631) -- (3.5226,3.7459) -- (3.5632,3.7287) -- (3.6039,3.7114) -- (3.6445,3.6941) -- (3.6852,3.6768) -- (3.7258,3.6594) -- (3.7665,3.6420) -- (3.8071,3.6245) -- (3.8477,3.6070) -- (3.8884,3.5895) -- (3.9290,3.5719) -- (3.9697,3.5544) -- (4.0103,3.5367) -- (4.0510,3.5191) -- (4.0916,3.5014) -- (4.1323,3.4836) -- (4.1729,3.4659) -- (4.2135,3.4481) -- (4.2542,3.4302) -- (4.2948,3.4124) -- (4.3355,3.3945) -- (4.3761,3.3765) -- (4.4168,3.3586) -- (4.4574,3.3406) -- (4.4981,3.3226) -- (4.5387,3.3045) -- (4.5794,3.2865) -- (4.6200,3.2684) -- (4.6606,3.2502) -- (4.7013,3.2321) -- (4.7419,3.2139) -- (4.7826,3.1957) -- (4.8232,3.1774) -- (4.8639,3.1592) -- (4.9045,3.1409) -- (4.9452,3.1226) -- (4.9858,3.1042) -- (5.0265,3.0859) -- (5.0671,3.0675) -- (5.1077,3.0490) -- (5.1484,3.0306) -- (5.1890,3.0121) -- (5.2297,2.9936) -- (5.2703,2.9751) -- (5.3110,2.9566) -- (5.3516,2.9380) -- (5.3923,2.9194) -- (5.4329,2.9008) -- (5.4735,2.8822) -- (5.5142,2.8636) -- (5.5548,2.8449) -- (5.5955,2.8262) -- (5.6361,2.8075) -- (5.6768,2.7887) -- (5.7174,2.7700) -- (5.7581,2.7512) -- (5.7987,2.7324) -- (5.8394,2.7136) -- (5.8800,2.6948) -- (5.9206,2.6759) -- (5.9613,2.6570) -- (6.0019,2.6381) -- (6.0426,2.6192) -- (6.0832,2.6003) -- (6.1239,2.5813) -- (6.1645,2.5624) -- (6.2052,2.5434) -- (6.2458,2.5244) -- (6.2865,2.5053) -- (6.3271,2.4863) -- (6.3677,2.4672) -- (6.4084,2.4481) -- (6.4490,2.4290) -- (6.4897,2.4099) -- (6.5303,2.3908) -- (6.5710,2.3717) -- (6.6116,2.3525) -- (6.6523,2.3333) -- (6.6929,2.3141) -- (6.7335,2.2949) -- (6.7742,2.2757) -- (6.8148,2.2564) -- (6.8555,2.2372) -- (6.8961,2.2179) -- (6.9368,2.1986) -- (6.9774,2.1793) -- (7.0181,2.1600) -- (7.0587,2.1406) -- (7.0994,2.1213) -- (7.1400,2.1019) -- (7.1806,2.0826) -- (7.2213,2.0632) -- (7.2619,2.0437) -- (7.3026,2.0243) -- (7.3432,2.0049) -- (7.3839,1.9854) -- (7.4245,1.9660) -- (7.4652,1.9465) -- (7.5058,1.9270) -- (7.5465,1.9075) -- (7.5871,1.8880) -- (7.6277,1.8684) -- (7.6684,1.8489) -- (7.7090,1.8293) -- (7.7497,1.8098) -- (7.7903,1.7902) -- (7.8310,1.7706) -- (7.8716,1.7510) -- (7.9123,1.7314) -- (7.9529,1.7117) -- (7.9935,1.6921) -- (8.0342,1.6724) -- (8.0748,1.6528) -- (8.1155,1.6331) -- (8.1561,1.6134) -- (8.1968,1.5937) -- (8.2374,1.5740) -- (8.2781,1.5542) -- (8.3187,1.5345) -- (8.3594,1.5147) -- (8.4000,1.4950) -- (8.4406,1.4752) -- (8.4813,1.4554) -- (8.5219,1.4356) -- (8.5626,1.4158) -- (8.6032,1.3960) -- (8.6439,1.3762) -- (8.6845,1.3563) -- (8.7252,1.3365) -- (8.7658,1.3166) -- (8.8065,1.2967) -- (8.8471,1.2769) -- (8.8877,1.2570) -- (8.9284,1.2371) -- (8.9690,1.2172) -- (9.0097,1.1972) -- (9.0503,1.1773) -- (9.0910,1.1574) -- (9.1316,1.1374) -- (9.1723,1.1175) -- (9.2129,1.0975) -- (9.2535,1.0775) -- (9.2942,1.0575) -- (9.3348,1.0375) -- (9.3755,1.0175) -- (9.4161,0.9975) -- (9.4568,0.9775) -- (9.4974,0.9574) -- (9.5381,0.9374) -- (9.5787,0.9173) -- (9.6194,0.8973) -- (9.6600,0.8772) -- (9.7006,0.8571) -- (9.7413,0.8370) -- (9.7819,0.8169) -- (9.8226,0.7968) -- (9.8632,0.7767) -- (9.9039,0.7566) -- (9.9445,0.7364) -- (9.9852,0.7163) -- (10.0258,0.6961) -- (10.0665,0.6760) -- (10.1071,0.6558) -- (10.1477,0.6356) -- (10.1884,0.6155) -- (10.2290,0.5953) -- (10.2697,0.5751) -- (10.3103,0.5549) -- (10.3510,0.5347) -- (10.3916,0.5144) -- (10.4323,0.4942);
  \filldraw[black] (0.0677,4.9281) circle (1.6pt);
  \draw[->, thick, black] (0.5,3.6) -- (0.085,4.87);
  \node[anchor=west, font=\small, inner sep=3pt] at (0.5,3.5) {$(105,\,1.3\!\times\!10^{-3})$};
  \filldraw[black] (4.5387,3.3045) circle (1.6pt);
  \draw[->, thick, black] (6.7742,4.2500) -- (4.5887,3.3545);
  \node[anchor=west, font=\small, inner sep=2pt] at (6.7742,4.2500) {$\mpz = 435$ (example pattern)};
\end{tikzpicture}
\caption{Lemma~\ref{lem:mono-a} in the running example: along the
$\mpo = 65$ slice, $\pgt(\mpz, 65)$ is strictly decreasing in
$\mpz$. The supremum over consistent values is therefore attained
at the leftmost endpoint $\mpz = \mpo + \deltaz = 105$.}
\label{fig:lemma1}
\end{figure}

Because $\pgt(\cdot, 65)$ is monotone decreasing, the supremum on this
slice occurs at the smallest consistent value $\mpz = 105$, where
$\pgt(105, 65) \approx 1.34\times 10^{-3}$. This single value is below
conventional $\alpha$, so we reject every pattern with $\mpo = 65$; we
need not evaluate any of the other 765 points on this slice. By
Lemma~\ref{lem:mono-a} the same holds for every $\mpo$, and the
supremum of $\pgt$ over consistent $(a, b)$ lies on the boundary
$a = b + \deltaz$. We are left with a one-dimensional problem in $b$.

Define $ \pgt^\prime(b) := \pgt(b + \deltaz, b) = \Pr\{ \Binom(2b + \deltaz, \tfrac{1}{2}) \geq b + \deltaz \}.$
Lemma~\ref{lem:mono-b} demonstrates that $ \pgt^\prime $ is increasing
in $ b $ when $ \deltaz \geq 2 $ and decreasing when $ \deltaz = 0. $
When $ \deltaz = 1, $ $ \pgt^\prime(b) = \tfrac{1}{2} $ for all
$ 0 \leq b \leq \Szo + \Soo. $ Since $ \deltaz = \Soz - \Szo - \az $
is known as soon as we specify $ \az, $ Lemma~\ref{lem:mono-b} allows
us to identify which boundary value will yield the largest p-value.
The case $ \deltaz < 0 $ (which arises when $ \az > \Soz - \Szo $)
falls outside the lemma's scope and is discussed in \S\ref{sec:practical}.

\begin{lemma}[Boundary monotonicity in $b$]\label{lem:mono-b}
For integer $\deltaz \geq 0$ and integer $b \geq 0$,
\begin{displaymath}
  \pgt^\prime(b + 1) = \pgt^\prime(b) + \frac{1}{4} \cdot \frac{\deltaz - 1}{b + \deltaz} \cdot \Pr\{\Binom(2b + \deltaz,\, \tfrac{1}{2}) = b + \deltaz - 1\}.
\end{displaymath}
The boundary value is strictly increasing in $b$ for
$\deltaz \geq 2$, identically $\tfrac{1}{2}$ for $\deltaz = 1$, and
strictly decreasing for $\deltaz = 0$.
\end{lemma}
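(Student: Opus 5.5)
The plan is to mimic the coupling argument used for Lemma~\ref{lem:mono-a}, this time adding two independent fair coin flips instead of one. Write $n := 2b + \deltaz$ and $k := b + \deltaz$, so that $\pgt^\prime(b) = \Pr\{B_1 \geq k\}$ with $B_1 \sim \Binom(n, \tfrac{1}{2})$. Moving from $b$ to $b+1$ raises the number of trials to $n + 2$ and the threshold to $k + 1$. Let $B_2, B_2^\prime$ be independent $\mathrm{Bernoulli}(\tfrac{1}{2})$ variables, independent of $B_1$, and set $X := B_2 + B_2^\prime$. Then $X$ takes the values $0, 1, 2$ with probabilities $\tfrac14, \tfrac12, \tfrac14$, and $B_1 + X \sim \Binom(n+2, \tfrac12)$. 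Hence $\pgt^\prime(b+1) = \Pr\{B_1 + X \geq k + 1\}$.

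Conditioning on $X$ as in the proof of Lemma~\ref{lem:mono-a} gives
\begin{displaymath}
  \pgt^\prime(b+1) = \tfrac14 \Pr\{B_1 \geq k+1\} + \tfrac12 \Pr\{B_1 \geq k\} + \tfrac14 \Pr\{B_1 \geq k-1\}.
\end{displaymath}
Next I rewrite both outer tails in terms of $\Pr\{B_1 \geq k\}$, using $\Pr\{B_1 \geq k+1\} = \Pr\{B_1 \geq k\} - \Pr\{B_1 = k\}$ and $\Pr\{B_1 \geq k-1\} = \Pr\{B_1 \geq k\} + \Pr\{B_1 = k-1\}$. This yields
\begin{displaymath}
  \pgt^\prime(b+1) - \pgt^\prime(b) = \tfrac14\bigl[\Pr\{B_1 = k-1\} - \Pr\{B_1 = k\}\bigr].
\end{displaymath}

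The remaining step is to compare the two adjacent point masses through the ratio of binomial coefficients, $\binom{n}{k} / \binom{n}{k-1} = (n-k+1)/k = (b+1)/(b+\deltaz)$. Substituting, the bracket equals
\begin{displaymath}
  \Pr\{B_1 = k-1\}\left(1 - \frac{b+1}{b+\deltaz}\right) = \frac{\deltaz - 1}{b + \deltaz}\,\Pr\{B_1 = k-1\},
\end{displaymath}
which is exactly the displayed identity.

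The three monotonicity claims then follow from the sign of $\deltaz - 1$, together with the fact that $\Pr\{B_1 = k-1\} > 0$ whenever $0 \leq k - 1 \leq n$. That condition holds for every $b \geq 0$ when $\deltaz \geq 1$. For $\deltaz = 1$ the increment vanishes, and $\pgt^\prime(0) = \Pr\{\Binom(1, \tfrac12) \geq 1\} = \tfrac12$ fixes the constant value. For $\deltaz = 0$ and $b \geq 1$, the increment is strictly negative.

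The main obstacle I expect is the degenerate corner $\deltaz = 0$, $b = 0$. There $k = 0$, the ratio step divides by $b + \deltaz = 0$, and the point mass at $k - 1 = -1$ vanishes, so the identity reads $0 \cdot (-1)/0$. I will treat this case by direct computation: $\pgt^\prime(0) = 1$ and $\pgt^\prime(1) = \Pr\{\Binom(2, \tfrac12) \geq 1\} = \tfrac34$, so the value still strictly decreases. I will state that the displayed formula is to be read for $b + \deltaz \geq 1$.
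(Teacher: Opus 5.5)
Your proposal is correct and follows the paper's proof essentially step for step: you add an independent $\Binom(2, \tfrac12)$ increment (your $B_2 + B_2^\prime$), condition on it, rewrite the outer tails, and convert the bracket using the ratio $(b+1)/(b+\deltaz)$ of adjacent point masses. Your explicit treatment of the corner $\deltaz = 0$, $b = 0$ is a small refinement the paper's proof leaves implicit: there the displayed formula is undefined, but direct computation gives $1 \to \tfrac34$. The same is true of your explicit argument for the strict decrease when $\deltaz = 0$.
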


The practical consequence of Lemma~\ref{lem:mono-b} is the
pattern hardest to reject lies at the upper boundary of the consistent
region (when $\deltaz \geq 2$) as illustrated in
Figure~\ref{fig:consistency}, and when combined with
Lemma~\ref{lem:mono-a}, uniquely identifies the single hardest to
reject pattern. If we reject that pattern, we reject $ \Hz $.

\begin{proof}
Suppose $ B_1 \sim \Binom(2b + \deltaz, \tfrac{1}{2}) $ and
$ B_2 \sim \Binom(2, \tfrac{1}{2}). $ Then $B_3 = B_1 + B_2$ has a
$ \Binom(2(b + 1) + \deltaz, \tfrac{1}{2}) $ distribution. By
iterated expectations,
\begin{align}
  \pgt^\prime(b+1) &= \Pr\{ B_3 \geq b + 1 + \deltaz \} \nonumber \\
  &= \Pr\{ B_1 \geq b + 1 + \deltaz \} \cdot \Pr\{ B_2 = 0 \} \nonumber \\
  &\phantom{=} \hspace{10pt} + \Pr\{ B_1 \geq b + \deltaz \} \cdot \Pr\{ B_2 = 1 \} \nonumber \\
  &\phantom{=} \hspace{10pt} + \Pr\{ B_1 \geq b - 1 + \deltaz \} \cdot \Pr\{ B_2 = 2 \} \nonumber \\
  &= \frac{1}{4} \Pr\{ B_1 \geq b + 1 + \deltaz \}
     + \frac{1}{2} \Pr\{ B_1 \geq b + \deltaz \}
     + \frac{1}{4} \Pr\{ B_1 \geq b - 1 + \deltaz \} \nonumber \\
  &= \Pr\{ B_1 \geq b + \deltaz \} + \frac{1}{4}\left[ \Pr\{ B_1 = b - 1 + \deltaz \} - \Pr\{ B_1 = b + \deltaz \} \right] \nonumber \\
  &= \pgt^\prime(b) + \frac{1}{4} \cdot \left(1 - \frac{b+1}{b+\deltaz}\right) \Pr\{ B_1 = b - 1 + \deltaz \} \nonumber \\
  &= \pgt^\prime(b) + \frac{1}{4} \cdot \frac{\deltaz - 1}{b+\deltaz} \cdot \Pr\{ B_1 = b - 1 + \deltaz \}. \nonumber
\end{align}
The second-to-last line uses the Binomial probability mass function to
write $ \Pr\{B_1 = b + \deltaz \} = \frac{b+1}{b+\deltaz} \cdot \Pr\{B_1 = b - 1 + \deltaz \}.$

When $ \deltaz \geq 2, $ we have $ \pgt^\prime(b+1) > \pgt^\prime(b), $
so $ \pgt^\prime $ is increasing. When $ \deltaz = 1, $ $ \pgt^\prime(b+1) = \pgt^\prime(b), $
so $ \pgt^\prime $ is a constant. Since
$ \pgt^\prime(b) = \Pr\{ \Binom(2b + 1, \tfrac{1}{2}) \geq b + 1 \}, $
and since $ \Binom(2b + 1, \tfrac{1}{2}) $ is symmetric about $b + \tfrac{1}{2}, $
$ \pgt^\prime(b) = \tfrac{1}{2} $ in this case.

\qedhere
\end{proof}

\begin{proposition}[Worst-case pattern]\label{prop:worst}
For integer $\deltaz \geq 2$ (equivalently, for $ \az \leq \Soz - \Szo - 2 $),
the supremum of $\pgt(a, b)$ over patterns consistent with $\Hz: \Ao \leq \az$
occurs at the unique point
\begin{align}
  b^\star &= \min\bigl(\Szo + \Soo,\; \Szz + \Soz - \deltaz\bigr),
  & a^\star &= \min\bigl(\Szo + \Soo + \deltaz,\; \Szz + \Soz \bigr), \label{eqn:worst} \\
  \intertext{or, when we assume monotonic treatment effects,}
  b^\star &= \min\bigl(\Szo + \Soo,\; \Soz - \deltaz\bigr),
  & a^\star &= \min\bigl(\Szo + \Soo + \deltaz,\; \Soz \bigr). \label{eqn:worst-mon}
\end{align}

For $\deltaz \in \{0, 1\}$ (equivalently, for
$ \az \in \{ \Soz - \Szo, \Soz - \Szo - 1\} $), the supremum
equals $1$ and $\tfrac{1}{2}$ respectively; the test cannot reject at
conventional $\alpha$.
\end{proposition}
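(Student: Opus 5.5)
The plan is to reduce the two-dimensional supremum to a one-dimensional one with Lemma~\ref{lem:mono-a}, and then locate the maximizer on the boundary line with Lemma~\ref{lem:mono-b}. Write $A := \Szz + \Soz$ and $B := \Szo + \Soo$ for the box limits, so that the consistent set is
\[
  C = \{(a,b) \in \mathbb{Z}^2 : 0 \le a \le A,\ 0 \le b \le B,\ a - b \ge \deltaz\}.
\]

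\textbf{Reduction to the boundary line.} Fix a $b$ for which the slice $\{a : (a,b) \in C\}$ is nonempty. This happens exactly when $b + \deltaz \le A$; since $\deltaz \ge 0$, the lower bound $a \ge 0$ is then automatic. The slice is the integer interval $[b+\deltaz, A]$. By Lemma~\ref{lem:mono-a}, $\pgt(\cdot, b)$ is strictly decreasing, so the slice maximum is attained uniquely at $a = b + \deltaz$, with value $\pgt^\prime(b)$. The supremum over $C$ therefore equals $\max\{\pgt^\prime(b) : 0 \le b \le \min(B, A - \deltaz)\}$. Any maximizer of the full problem must lie on this line, because every point off the line is strictly beaten by the line point in its own slice.

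\textbf{Locating the corner.} For $\deltaz \ge 2$, Lemma~\ref{lem:mono-b} says $\pgt^\prime$ is strictly increasing in $b$. The maximum is therefore attained uniquely at the largest feasible value, $b^\star = \min(B, A - \deltaz)$, and then $a^\star = b^\star + \deltaz = \min(B + \deltaz, A)$. This is~(\ref{eqn:worst}). One nondegeneracy point must be checked: the feasible range must be nonempty, i.e.\ $A \ge \deltaz$. If $A < \deltaz$, then $C$ is empty and $\Hz$ is rejected with certainty. This is the case $\az < -f_T$, which is already excluded from the nontrivial range $\az \in [-f_T, s_T)$, since $A - \deltaz = \Szz + \Szo + \az = f_T + \az$. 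I would note this in the proof.

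\textbf{Monotone case and small $\deltaz$.} Under monotonicity the box becomes $[0, \Soz] \times [\Szo, \Szo + \Soo]$. The same two steps apply, with the feasible $b$-range $[\Szo, \min(\Szo + \Soo, \Soz - \deltaz)]$; Lemma~\ref{lem:mono-b} holds for every integer $b \ge 0$, so the restricted range is harmless. Taking the largest $b$ gives~(\ref{eqn:worst-mon}), provided the range is nonempty. Finally, for $\deltaz = 1$, Lemma~\ref{lem:mono-b} gives $\pgt^\prime \equiv \tfrac12$, so the supremum is $\tfrac12$. For $\deltaz = 0$, $\pgt^\prime$ is decreasing, so the supremum sits at $b = 0$, where $\pgt^\prime(0) = \Pr\{\Binom(0,\tfrac12) \ge 0\} = 1$. (Under monotonicity, $b$ is forced to $\Szo$ there and the claimed value of $1$ would need adjusting; I would restrict that clause to the unconstrained box.)

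\textbf{Main obstacle.} The lemmas do most of the work. The delicate part is bookkeeping: confirming that the feasible $b$-interval is nonempty and that its endpoints are exactly the stated minima. In particular, one must show that the box constraint $a \le A$ is what truncates $b$ through $b \le A - \deltaz$, while the constraint $a \ge 0$ never binds.
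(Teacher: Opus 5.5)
Your proposal is correct and follows the paper's proof. Lemma~\ref{lem:mono-a} collapses each horizontal slice onto the line $a = b + \deltaz$, and Lemma~\ref{lem:mono-b} then pushes the maximum to the largest feasible $b$; your bookkeeping (the range is nonempty iff $\az \geq -f_T$, and $a \geq 0$ never binds) fills in what the paper leaves implicit, and your caveat on the $\deltaz = 0$ clause is also right, since under monotonicity $b \geq \Szo$ caps the supremum at $\Pr\{\Binom(2\Szo, \tfrac{1}{2}) \geq \Szo\}$, below $1$ when $\Szo > 0$ but still above $\tfrac{1}{2}$, so the ``cannot reject'' conclusion survives.
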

\begin{proof}
By Lemma~\ref{lem:mono-a}, the supremum on each horizontal slice lies
on the boundary $a = b + \deltaz$. By Lemma~\ref{lem:mono-b}, the
supremum on the boundary is attained at the largest $b$ admitted by
the box constraints, which is~(\ref{eqn:worst}),
or~(\ref{eqn:worst-mon}) when we assume monotonic treatment effects.
The boundary cases follow from the explicit values in
Lemma~\ref{lem:mono-b}.
\end{proof}

The two cases reflected in the min formulae correspond to whether the
diagonal line corresponding to the consistency constraint intersects
the top or right boundary of the box, resp. In
Figure~\ref{fig:consistency}, the line intersected the top of the box,
so $ b^\star = \Szo + \Soo $ and
$ a^\star = \Szo + \Soo + \deltaz = \Soz + \Soo - \az. $
This will be the case whenever $ \az \geq \Soo - \Szz $ (no
monotonicity assumption) or when $ \az \geq \Soo $ (monotonic
treatment effects). The practical consequence: testing $\Hz: \Ao \leq \az$
requires a \emph{single} Binomial tail calculation, computed at the
boundary corner $(a^\star, b^\star)$.

\paragraph{Example, continued.} Along the consistency boundary
$\mpo = \mpz - 40$, the box constraints permit
$\mpo \in [0, \min(\Szo + \Soo,\, \Szz + \Soz - 40)]
   = [0, \min(130,\, 830)] = [0, 130]$.
Lemma~\ref{lem:mono-b} (with $\deltaz = 40 \geq 2$) says
$\pgt(\mpo + 40, \mpo)$ is strictly increasing in $\mpo$ along this
range; Figure~\ref{fig:lemma2} plots the curve. By
Proposition~\ref{prop:worst} the worst pattern is
$(a^\star, b^\star) = (170, 130)$, with
$\pgt(a^\star, b^\star) \approx 0.0121 < 0.05$. We reject the
composite null $\Hz: \Ao \leq 0$ at $\alpha = 0.05$, on the strength
of this single Binomial tail calculation rather than the
$871 \times 131 = 114{,}101$ patterns enumerated
by~(\ref{eqn:eq-count}).

\begin{figure}[ht]
\centering
\begin{tikzpicture}
  \draw[thick] (0,0) rectangle (10.5,5.0);
  \draw[gray!25, very thin] (0,5.0000) -- (10.5,5.0000);
  \draw[thick] (0,5.0000) -- (-0.10,5.0000) node[anchor=east,font=\small] {$10^{0}$};
  \draw[gray!25, very thin] (0,3.8462) -- (10.5,3.8462);
  \draw[thick] (0,3.8462) -- (-0.10,3.8462) node[anchor=east,font=\small] {$10^{-3}$};
  \draw[gray!25, very thin] (0,2.6923) -- (10.5,2.6923);
  \draw[thick] (0,2.6923) -- (-0.10,2.6923) node[anchor=east,font=\small] {$10^{-6}$};
  \draw[gray!25, very thin] (0,1.5385) -- (10.5,1.5385);
  \draw[thick] (0,1.5385) -- (-0.10,1.5385) node[anchor=east,font=\small] {$10^{-9}$};
  \draw[gray!25, very thin] (0,0.3846) -- (10.5,0.3846);
  \draw[thick] (0,0.3846) -- (-0.10,0.3846) node[anchor=east,font=\small] {$10^{-12}$};
  \draw[thick] (0.0000,0) -- (0.0000,-0.10) node[anchor=north,font=\small] {$0$};
  \draw[thick] (2.3333,0) -- (2.3333,-0.10) node[anchor=north,font=\small] {$30$};
  \draw[thick] (5.0556,0) -- (5.0556,-0.10) node[anchor=north,font=\small] {$65$};
  \draw[thick] (7.7778,0) -- (7.7778,-0.10) node[anchor=north,font=\small] {$100$};
  \draw[thick] (10.1111,0) -- (10.1111,-0.10) node[anchor=north,font=\small] {$130$};
  \node[anchor=north,font=\small] at (5.25, -0.65) {$\mpo$ (boundary $\mpz = \mpo + 40$)};
  \node[anchor=south,rotate=90,font=\small] at (-1.15, 2.5) {$p_{>}(\mpo+40,\,\mpo)$};
  \draw[metaorange, dashed, thick] (0,4.4996) -- (10.5,4.4996);
  \node[metaorange, anchor=south east, font=\small, inner sep=2pt] at (10.45,4.4996) {$\alpha=0.05$};
  \draw[metablue, thick] (0.0000,0.3688) -- (0.0778,0.7655) -- (0.1556,1.0580) -- (0.2333,1.2938) -- (0.3111,1.4919) -- (0.3889,1.6628) -- (0.4667,1.8127) -- (0.5444,1.9458) -- (0.6222,2.0654) -- (0.7000,2.1736) -- (0.7778,2.2721) -- (0.8556,2.3624) -- (0.9333,2.4456) -- (1.0111,2.5225) -- (1.0889,2.5939) -- (1.1667,2.6604) -- (1.2444,2.7226) -- (1.3222,2.7808) -- (1.4000,2.8355) -- (1.4778,2.8870) -- (1.5556,2.9356) -- (1.6333,2.9815) -- (1.7111,3.0250) -- (1.7889,3.0663) -- (1.8667,3.1055) -- (1.9444,3.1428) -- (2.0222,3.1783) -- (2.1000,3.2122) -- (2.1778,3.2447) -- (2.2556,3.2756) -- (2.3333,3.3053) -- (2.4111,3.3338) -- (2.4889,3.3611) -- (2.5667,3.3873) -- (2.6444,3.4125) -- (2.7222,3.4367) -- (2.8000,3.4601) -- (2.8778,3.4826) -- (2.9556,3.5042) -- (3.0333,3.5252) -- (3.1111,3.5454) -- (3.1889,3.5649) -- (3.2667,3.5838) -- (3.3444,3.6021) -- (3.4222,3.6198) -- (3.5000,3.6369) -- (3.5778,3.6535) -- (3.6556,3.6696) -- (3.7333,3.6853) -- (3.8111,3.7005) -- (3.8889,3.7152) -- (3.9667,3.7295) -- (4.0444,3.7434) -- (4.1222,3.7570) -- (4.2000,3.7701) -- (4.2778,3.7830) -- (4.3556,3.7954) -- (4.4333,3.8076) -- (4.5111,3.8194) -- (4.5889,3.8310) -- (4.6667,3.8422) -- (4.7444,3.8532) -- (4.8222,3.8639) -- (4.9000,3.8744) -- (4.9778,3.8846) -- (5.0556,3.8946) -- (5.1333,3.9044) -- (5.2111,3.9139) -- (5.2889,3.9232) -- (5.3667,3.9323) -- (5.4444,3.9412) -- (5.5222,3.9499) -- (5.6000,3.9585) -- (5.6778,3.9668) -- (5.7556,3.9750) -- (5.8333,3.9830) -- (5.9111,3.9908) -- (5.9889,3.9985) -- (6.0667,4.0060) -- (6.1444,4.0134) -- (6.2222,4.0207) -- (6.3000,4.0278) -- (6.3778,4.0347) -- (6.4556,4.0415) -- (6.5333,4.0483) -- (6.6111,4.0548) -- (6.6889,4.0613) -- (6.7667,4.0676) -- (6.8444,4.0739) -- (6.9222,4.0800) -- (7.0000,4.0860) -- (7.0778,4.0919) -- (7.1556,4.0977) -- (7.2333,4.1034) -- (7.3111,4.1090) -- (7.3889,4.1145) -- (7.4667,4.1199) -- (7.5444,4.1252) -- (7.6222,4.1305) -- (7.7000,4.1356) -- (7.7778,4.1407) -- (7.8556,4.1457) -- (7.9333,4.1506) -- (8.0111,4.1554) -- (8.0889,4.1602) -- (8.1667,4.1649) -- (8.2444,4.1695) -- (8.3222,4.1740) -- (8.4000,4.1785) -- (8.4778,4.1829) -- (8.5556,4.1873) -- (8.6333,4.1916) -- (8.7111,4.1958) -- (8.7889,4.1999) -- (8.8667,4.2040) -- (8.9444,4.2081) -- (9.0222,4.2121) -- (9.1000,4.2160) -- (9.1778,4.2199) -- (9.2556,4.2237) -- (9.3333,4.2275) -- (9.4111,4.2312) -- (9.4889,4.2348) -- (9.5667,4.2385) -- (9.6444,4.2420) -- (9.7222,4.2455) -- (9.8000,4.2490) -- (9.8778,4.2525) -- (9.9556,4.2558) -- (10.0333,4.2592) -- (10.1111,4.2625);
  \filldraw[black] (10.1111,4.2625) circle (1.6pt);
  \node[anchor=north east, font=\small, inner sep=3pt] at (9.9111,4.2125) {$(b^\star,\,p^\star)=(130,\,0.012)$};
\end{tikzpicture}
\caption{Lemma~\ref{lem:mono-b} in the running example: along the
consistency boundary $\mpo = \mpz - 40$, the worst-case p-value is
strictly increasing in $\mpo$. The supremum is attained at the
right endpoint $b^\star = \min(130, 830) = 130$, with
$a^\star = 170$ and
$\pgt(170, 130) \approx 0.0121$. Since this falls below
$\alpha = 0.05$ (orange dashed line), $\Hz: \Ao \leq 0$ is rejected.}
\label{fig:lemma2}
\end{figure}

\paragraph{A test for the upper bound.}
We handle the hypothesis $\Hz^\prime: \Ao \geq \az$ symmetrically.
For the upper-bound regime $\az \geq \Soz - \Szo$, consistency
becomes $\mpoo - \mpzo \geq -\deltaz$: the consistency region lies
\emph{above} the diagonal line but subject to the same box constraints
illustrated in Figure~\ref{fig:consistency}. For evaluating a
particular pattern, the natural test is left-tailed,
$\plt(a, b) := \Pr\{\Binom(a + b, \tfrac{1}{2}) \leq a\}$. The
analogue of Lemma~\ref{lem:mono-a} is that for fixed $ a, $ the
supremum lies at the bottom of the consistency region, along the
diagonal line. The analogue of Lemma~\ref{lem:mono-b} demonstrates the
supremum lies on the right side of this diagonal (when $ \deltaz \leq -2 $).
The worst-case pattern is given by the same formula as
in~(\ref{eqn:worst}) and~(\ref{eqn:worst-mon}).

\paragraph{Inference for $\Az$.}
The same approach works for the net effect among the controls, $\Az$,
with treatment and control exchanged. Inference about $\Ao$ rests on
the \emph{control} potential outcomes implied by a sharp null,
$r_{Csi}^0 = R_{si} - Z_{si}\,\tau_{si}^0$, and runs McNemar's test on
the resulting control-based discordant pairs. Inference about $\Az$
rests on the \emph{treated} potential outcomes implied by the same
sharp null, $r_{Tsi}^0 = R_{si} + (1 - Z_{si})\,\tau_{si}^0$, and runs
McNemar on the treatment-based discordant pairs.

We still categorize the pairs by their observed outcomes
$(R_{si_T}, R_{si_C})$, so the counts $\Sjk$ are unchanged. What
differs is the hypothesized effect rendering a pair concordant or
discordant, which now acts on the \emph{control} unit. The analogue of
Table~\ref{tbl:treatment-concordant} is obtained by negating its
Discordant column: rows $(0,0)$ and $(1,1)$ flip sign, while the
Concordant column is unchanged. The test statistic is still
the number of discordant pairs in which the treated unit is the
success-under-treatment, which now has formula $\mop := \moz + \moo$.
Writing $\mzp := \mzz + \mzo$, the realized net effect is
$\Az = \Soz - \Szo + \mzp - \mop$, analogous to
the identity for $\Ao$. The box constraints of \S\ref{sec:hardest}
then become
\begin{equation}\label{eqn:a0-box}
   0 \leq \mzp \leq \Szz + \Szo,
   \qquad
   0 \leq \mop \leq \Soz + \Soo,
\end{equation}
with $\mop - \mzp \geq \Soz - \Szo - \az =: \deltaz $
completing the consistency region. The analogues of
Lemmas~\ref{lem:mono-a} and~\ref{lem:mono-b} set $a = \mop$ and
$b = \mzp$ then proceed verbatim. Table~\ref{tbl:worst-case}
summarizes the worst-case patterns for each scenario.

\begin{table}[ht]
\centering
\small
\caption{Worst-case patterns $(a^\star, b^\star)$ for the matched-pair
tests, by estimand and monotonicity assumption; the worst-case pattern
is the same for both test directions. When testing
$\Hz: A_{0/1} \leq \az$ vs.\ $\Ho: A_{0/1} > \az$ (for
$\az \leq \Soz - \Szo - 2$), the worst-case $p$-value is
$\Pr\{\Binom(a^\star + b^\star, \tfrac{1}{2}) \geq a^\star\}$; when
testing $\Hz^\prime: A_{0/1} \geq \az$ vs.\ $\Ho^\prime: A_{0/1} < \az$
(for $\az \geq \Soz - \Szo + 2$), it is
$\Pr\{\Binom(a^\star + b^\star, \tfrac{1}{2}) \leq a^\star\}$. Within
each $\min$, the first argument corresponds to the consistency line
meeting the top of the constraint box, the second to its right edge.}
\label{tbl:worst-case}
\begin{tabular}{c c c c}
\toprule
Estimand & Monotonic Treatment Effects? & $a^\star$ & $b^\star$ \\
\midrule
$\Ao$ & N & $\min(\Soz+\Soo-\az,\; \Szz+\Soz)$
      & $\min(\Szo+\Soo,\; \Szz+\Szo+\az)$ \\
$\Ao$ & Y & $\min(\Soz+\Soo-\az,\; \Soz)$
      & $\min(\Szo+\Soo,\; \Szo+\az)$ \\
$\Az$ & N & $\min(\Szz+\Soz-\az,\; \Soz+\Soo)$
      & $\min(\Szz+\Szo,\; \Szo+\Soo+\az)$ \\
$\Az$ & Y & $\min(\Szz+\Soz-\az,\; \Soz)$
      & $\min(\Szz+\Szo,\; \Szo+\az)$ \\
\bottomrule
\end{tabular}
\end{table}

\section{Prediction sets via binary search}\label{sec:practical}
Inverting the tests gives prediction sets for $\Ao$. When $ \az $ is
at its lowest possible value, $ -f_T, $ then
$ \deltaz = \Szz + \Soz, $ $ b^\star = 0, $ and
\begin{align}
  \pgt^\prime(b^\star) &= \Pr\{ \Binom(\Szz + \Soz, \tfrac{1}{2}) \geq \Szz + \Soz \} \nonumber \\
  &= 2^{-(\Szz + \Soz)} = 2^{-f_C}. \nonumber
\end{align}
If there are at least 5 failures among the control units, we can
reject $\Hz: \az =−f_T$ at the conventional $\alpha = 0.05$ level.

As $\az$ grows, $\deltaz$ shrinks, and the worst-case pattern moves
first up and then left. Both lead to larger p-values by arguments
similar to those underlying Lemmas~\ref{lem:mono-a}
and~\ref{lem:mono-b}. When $ \az = \Soz - \Szo - 1, $ $ \deltaz = 1, $
$ \pgt^\prime(b^\star) = \tfrac{1}{2}, $ and we cannot reject. The
lower endpoint of the one-sided $1 - \alpha$ prediction set,
\begin{displaymath}
  \Lalpha := \inf\{\az \in \mathbb{Z}
                \mid \pgt(a^\star(\az), b^\star(\az)) > \alpha\},
\end{displaymath}
may therefore be located by binary search over the integer range
$\az \in [-f_T, \Soz - \Szo - 1]$ in $O(\log S)$ Binomial tail
calculations. The other one-sided test gives an upper endpoint
$\Ualpha$. A two-sided $1 - 2\alpha$ prediction set for $\Ao$ is
$[\Lalpha, \Ualpha]$.

\paragraph{Example, continued.}
Sweeping $\az$ in the running example pins both endpoints. Throughout
the relevant range, the worst $b^\star$ stays at the box-constraint
ceiling $\Szo + \Soo = 130$, and $a^\star = 130 + \deltaz$. For
the lower endpoint with $\alpha = 0.05$,
\begin{align}
   \pgt(159, 130) &\approx 0.0497 & (\az &= 11, \ \deltaz = 29), \nonumber \\
   \pgt(158, 130) &\approx 0.0557 & (\az &= 12, \ \deltaz = 28). \nonumber
\end{align}
The worst-case p-value crosses $0.05$ between $\az = 11$ and $\az = 12$,
so $L_{0.05} = 12$. The upper bound test gives
\begin{align}
   \plt(104, 130) &\approx 0.0510 & (\az &= 66, \ \deltaz = -26), \nonumber \\
   \plt(103, 130) &\approx 0.0441 & (\az &= 67, \ \deltaz = -27), \nonumber
\end{align}
so $U_{0.05} = 66$. The one-sided 95\% prediction sets are
$\{\Ao \geq 12\}$ and $\{\Ao \leq 66\}$ respectively; their
intersection is the two-sided 90\% prediction set
$\Ao \in [12, 66]$.

\paragraph{The no-information band.}
The boundary cases of Proposition~\ref{prop:worst} are interpretable.
When $\deltaz = 0$ (i.e., $\az = \Soz - \Szo$, the observed
treated-success surplus), the consistent pattern $(0, 0)$ corresponds
to ``no treatment effect on any treated unit,'' yields an empty
McNemar test, and has p-value $1$. When $\deltaz = 1$, every
boundary pattern has p-value exactly $\tfrac{1}{2}$. When
$\deltaz < 0$ (i.e., $\az > \Soz - \Szo$), the consistency
constraint $\mpz \geq \mpo + \deltaz$ is weaker than the natural
non-negativity bound $\mpz, \mpo \geq 0$; the unconstrained worst-case
sits at $\mpz < \mpo$, where the right-tailed p-value
$\Pr\{\Binom(\mpz + \mpo, \tfrac{1}{2}) \geq \mpz\}$ approaches $1$.
In all three cases the test cannot reject at conventional $\alpha$,
so the lower endpoint of any one-sided prediction set for $\Ao$ lies
in $\deltaz \geq 2$, i.e., the prediction set necessarily contains
$\{\az \geq \Soz - \Szo - 1\}$.

\paragraph{The Hodges-Lehmann point estimate.}
The same machinery yields a closed-form point estimate. The
worst-case statistic $a^\star(\az)$ has null mean
$\tfrac{1}{2}(a^\star + b^\star)(\az)$, so the
estimating function $S(\az) := a^\star(\az) - \tfrac{1}{2}(a^\star
+ b^\star)(\az)$ has null expectation zero. Solving $S(\az) = 0$
for $\az$ defines the point estimator $\hat A$, an instance of the
method-of-moments procedure analyzed by
\citet[Lemma~1.2]{maritz-1995-distribution-free} and consistent
under the regularity conditions stated there. This is the
construction \citet{hodges-1963-point-estimates} introduced for
inverting a location-shift rank test; here we apply it to matched
pairs with binary outcomes. Equating $a^\star$ to its null mean forces
$a^\star = b^\star$, hence $\deltaz = 0$ via
Proposition~\ref{prop:worst} (with or without assuming monotonic
treatment effects), and so the point estimate is
\begin{equation}\label{eqn:hl-a1}
   \hat A = \Soz - \Szo.
\end{equation}
The worst-case pattern $(a^\star, b^\star)$ is the same for $ \Hz $ or
$ \Hz^\prime $. They differ for inferences regarding $ \Ao $ or $ \Az $,
but for $ \Az $, too, the strategy leads to Equation~(\ref{eqn:hl-a1})
as the point estimate. Equation~(\ref{eqn:hl-a1}) is the difference in
observed discordant-pair counts, which matches classic discussions
based on sampling from a population. Our derivation shows the
assignment mechanism (random in an experiment or strongly ignorable in
an observational study) justifies the point estimate, without assuming
the units are samples from a population.

\paragraph{Example, continued.}
The Hodges-Lehmann point estimate for the net treatment effect among
treated units is $\hat A = \Soz - \Szo = 40$. The prediction set
above brackets the point estimate asymmetrically, with $28$ units of
room below and $26$ above. The asymmetry reflects the larger Binomial
variance (proportional to $n = a^\star + b^\star$) at the lower-bound
crossing ($n = 288$) than at the upper ($n = 234$), so we need a
slightly wider half-interval below the point estimate than above.

\paragraph{Large-sample approximation.}
For large $S$ the binary search admits a closed-form Gaussian
approximation. Standardize the worst-case McNemar statistic against
its $\Binom(a^\star + b^\star, \tfrac{1}{2})$ null, with a continuity
correction $a^\star \mapsto a^\star - \tfrac{1}{2}$ for the integer
support so we approximate
$\Pr\{a^\star \geq a\} = \Pr\{a^\star > a - 1\}$ symmetrically,
\begin{displaymath}
   Z := \frac{a^\star - \tfrac{1}{2} - \tfrac{1}{2}(a^\star + b^\star)}
             {\tfrac{1}{2}\sqrt{a^\star + b^\star}}
       = \frac{\deltaz - 1}{\sqrt{2 b^\star + \deltaz}}.
\end{displaymath}
Setting $Z = \zalpha$ and squaring yields a quadratic in
$\deltaz$, but $b^\star = \min(\Szo + \Soo,\, \Szz + \Soz - \deltaz)$
itself depends on $\deltaz$, so the positive root takes one of two
forms according to which face of the constraint box the consistency
line meets (\S\ref{sec:hardest}; Figure~\ref{fig:consistency}).

\emph{Top face} ($b^\star = \Szo + \Soo$, when $\az \geq \Soo - \Szz$;
equivalently, when $ \deltaz \leq \Szz + \Soz - \Szo - \Soo $):
$\deltaz^2 - (2 + \zalpha^2) \deltaz + 1 - 2(\Szo + \Soo) \zalpha^2 = 0$, so
\begin{equation}\label{eqn:root-top}
   \deltaz^\star = 1 + \tfrac{1}{2} \zalpha^2
     + \sqrt{2 \zalpha^2 (\Szo + \Soo)
             + \zalpha^2 + \tfrac{1}{4} \zalpha^4}.
\end{equation}

\emph{Right face} ($b^\star = \Szz + \Soz - \deltaz$, hence
$a^\star = \Szz + \Soz$, when $\az \leq \Soo - \Szz$): here
$\deltaz^2 - (2 - \zalpha^2) \deltaz + 1 - 2(\Szz + \Soz)
\zalpha^2 = 0$, so
\begin{equation}\label{eqn:root-right}
   \deltaz^\star = 1 - \tfrac{1}{2} \zalpha^2
     + \sqrt{2 \zalpha^2 (\Szz + \Soz)
             - \zalpha^2 + \tfrac{1}{4} \zalpha^4}.
\end{equation}

The lower endpoint is $\Lalpha = (\Soz - \Szo)
- \lfloor \deltaz^\star \rfloor$, taking whichever root is
self-consistent (meaning, the one whose implied
$\az = (\Soz - \Szo) - \lfloor \deltaz^\star \rfloor$
is on the corresponding side of $ \Soo - \Szz $).
Since $Z$ is continuous and strictly increasing in
$\deltaz$, with only a change of slope where the faces meet
($\deltaz = \Szz + \Soz - \Szo - \Soo$), exactly one root is
self-consistent; on the rare occasion that the flooring straddles the
crossover, $\Lalpha$ is settled by evaluating $Z$ directly on the few
integers between the two candidates. The upper endpoint $\Ualpha$
follows from the left-tailed test, its root carrying
$-\tfrac{1}{2} \zalpha^2$ in place of
$+\tfrac{1}{2} \zalpha^2$; the leading $+1$ from the continuity
correction is common to both endpoints.

Factoring $\zalpha$ out of the square root in
(\ref{eqn:root-top})--(\ref{eqn:root-right}) gives, on each face,
$\deltaz^\star = 1 \pm \tfrac{1}{2} \zalpha^2
+ \zalpha\sqrt{2 \Sbullet \pm 1 + \tfrac{1}{4} \zalpha^2}$,
where the ceiling $\Sbullet$ is $\Szo + \Soo$ with the upper signs
(top face) and $\Szz + \Soz$ with the lower (right face). In the
typical regime a single face -- the one with the smaller ceiling
$\min(\Szz + \Soz,\, \Szo + \Soo)$ -- binds both endpoints;
dropping the $\pm 1$ inside the radical (negligible against
$2 \Sbullet$), the prediction-set endpoints are
\begin{equation}\label{eqn:LU-approx}
   \Lalpha,\; \Ualpha \;\approx\;
   (\Soz - \Szo) - \tfrac{1}{2} \zalpha^2
   \,\mp\, \left(1 + \zalpha
     \sqrt{2 \min(\Szz + \Soz,\, \Szo + \Soo)
           + \tfrac{1}{4} \zalpha^2}\right).
\end{equation}
The interval is nearly symmetric about the point
estimate~(\ref{eqn:hl-a1}). The $-\tfrac{1}{2} \zalpha^2$ term
shifts both endpoints slightly downward, so the lower half-width
carries a slight excess over the upper, a closed-form expression of
the asymmetry observed numerically above.

For the running example the binding ceiling is
$\min(\Szz+\Soz,\, \Szo+\Soo) = \Szo + \Soo = 130$.
The lower root~(\ref{eqn:root-top}) is
$\deltaz^\star \approx 28.96$, giving
$L_{0.05} = 40 - \lfloor 28.96 \rfloor = 12$; the mirror root is
$\approx 26.16$, giving $U_{0.05} = 40 + \lfloor 26.16 \rfloor = 66$.
The Gaussian approximation lands on the binary-search values
$[12, 66]$. The approximation requires $\deltaz \gg 2$ and should not
be applied inside the no-information band.

\paragraph{Inference for $\Az$.}\label{par:a0}
The topics of this section carry over to $\Az$ unchanged. With the
worst-case patterns already established in \S\ref{sec:hardest}
(Table~\ref{tbl:worst-case}, under the swapped box
constraints~(\ref{eqn:a0-box})), the binary-search inversion of the
exact test, its large-sample Gaussian approximation, and the
Hodges-Lehmann point estimate all apply verbatim, differing only in
which marginal of the $\Sjk$ table supplies the box ceilings.
For the running example, an exact two-sided 90\% prediction set for
$ \Az $ is $[11, \, 72]$. The Gaussian approximation leads to the same
interval.

Table~\ref{tbl:large-sample} collects the resulting large-sample
endpoints for both one-sided tests, for $ \Ao $ and $ \Az $, with and
without the monotonicity assumption, and for each of the two
constraint-box faces the consistency line may meet.

\begin{table}[tp]
\centering
\small
\setlength{\tabcolsep}{4pt}
\caption{Large-sample Gaussian approximations to the prediction-set
endpoints, by estimand ($\Ao$ or $\Az$), monotonicity assumption,
endpoint (lower $\Lalpha$ or upper $\Ualpha$), and which face of the
constraint box the consistency line meets --- the top ($T$) or right
($R$). Throughout, $z := \zalpha$, and each expression is the
continuity-corrected root of the standardization in
\S\ref{sec:practical}. For a given estimand and monotonicity
assumption, the applicable face is the one whose endpoint satisfies
the criterion in the final column; the integer endpoints are then
$\Lalpha = \lceil \cdot \rceil$ and $\Ualpha = \lfloor \cdot
\rfloor$ of the tabulated value.}
\label{tbl:large-sample}
\begin{tabular}{c c l c}
\toprule
Estimand & \shortstack{Monotonic\\ Treatment\\ Effects?} & Formula & Criterion \\
\midrule
\multirow{8}{*}{$\Ao$}
 & \multirow{4}{*}{N}
   & $\Lalpha^T = \Soz-\Szo - 1 - \tfrac{1}{2}z^2 - z\sqrt{2(\Szo+\Soo) + 1 + \tfrac{1}{4}z^2}$ & $\Lalpha^T \geq \Soo-\Szz$ \\
 & & $\Lalpha^R = \Soz-\Szo - 1 + \tfrac{1}{2}z^2 - z\sqrt{2(\Szz+\Soz) - 1 + \tfrac{1}{4}z^2}$ & $\Lalpha^R \leq \Soo-\Szz$ \\
 & & $\Ualpha^T = \Soz-\Szo + 1 - \tfrac{1}{2}z^2 + z\sqrt{2(\Szo+\Soo) - 1 + \tfrac{1}{4}z^2}$ & $\Ualpha^T \geq \Soo-\Szz$ \\
 & & $\Ualpha^R = \Soz-\Szo + 1 + \tfrac{1}{2}z^2 + z\sqrt{2(\Szz+\Soz) + 1 + \tfrac{1}{4}z^2}$ & $\Ualpha^R \leq \Soo-\Szz$ \\
\cmidrule(l){2-4}
 & \multirow{4}{*}{Y}
   & $\Lalpha^T = \Soz-\Szo - 1 - \tfrac{1}{2}z^2 - z\sqrt{2(\Szo+\Soo) + 1 + \tfrac{1}{4}z^2}$ & $\Lalpha^T \geq \Soo$ \\
 & & $\Lalpha^R = \Soz-\Szo - 1 + \tfrac{1}{2}z^2 - z\sqrt{2 \Soz - 1 + \tfrac{1}{4}z^2}$ & $\Lalpha^R \leq \Soo$ \\
 & & $\Ualpha^T = \Soz-\Szo + 1 - \tfrac{1}{2}z^2 + z\sqrt{2(\Szo+\Soo) - 1 + \tfrac{1}{4}z^2}$ & $\Ualpha^T \geq \Soo$ \\
 & & $\Ualpha^R = \Soz-\Szo + 1 + \tfrac{1}{2}z^2 + z\sqrt{2 \Soz + 1 + \tfrac{1}{4}z^2}$ & $\Ualpha^R \leq \Soo$ \\
\midrule
\multirow{8}{*}{$\Az$}
 & \multirow{4}{*}{N}
   & $\Lalpha^T = \Soz-\Szo - 1 - \tfrac{1}{2}z^2 - z\sqrt{2(\Szz+\Szo) + 1 + \tfrac{1}{4}z^2}$ & $\Lalpha^T \geq \Szz-\Soo$ \\
 & & $\Lalpha^R = \Soz-\Szo - 1 + \tfrac{1}{2}z^2 - z\sqrt{2(\Soz+\Soo) - 1 + \tfrac{1}{4}z^2}$ & $\Lalpha^R \leq \Szz-\Soo$ \\
 & & $\Ualpha^T = \Soz-\Szo + 1 - \tfrac{1}{2}z^2 + z\sqrt{2(\Szz+\Szo) - 1 + \tfrac{1}{4}z^2}$ & $\Ualpha^T \geq \Szz-\Soo$ \\
 & & $\Ualpha^R = \Soz-\Szo + 1 + \tfrac{1}{2}z^2 + z\sqrt{2(\Soz+\Soo) + 1 + \tfrac{1}{4}z^2}$ & $\Ualpha^R \leq \Szz-\Soo$ \\
\cmidrule(l){2-4}
 & \multirow{4}{*}{Y}
   & $\Lalpha^T = \Soz-\Szo - 1 - \tfrac{1}{2}z^2 - z\sqrt{2(\Szz+\Szo) + 1 + \tfrac{1}{4}z^2}$ & $\Lalpha^T \geq \Szz$ \\
 & & $\Lalpha^R = \Soz-\Szo - 1 + \tfrac{1}{2}z^2 - z\sqrt{2 \Soz - 1 + \tfrac{1}{4}z^2}$ & $\Lalpha^R \leq \Szz$ \\
 & & $\Ualpha^T = \Soz-\Szo + 1 - \tfrac{1}{2}z^2 + z\sqrt{2(\Szz+\Szo) - 1 + \tfrac{1}{4}z^2}$ & $\Ualpha^T \geq \Szz$ \\
 & & $\Ualpha^R = \Soz-\Szo + 1 + \tfrac{1}{2}z^2 + z\sqrt{2 \Soz + 1 + \tfrac{1}{4}z^2}$ & $\Ualpha^R \leq \Szz$ \\
\bottomrule
\end{tabular}
\end{table}

\section{Confidence sets for the ATE}\label{sec:ate}
Section~\ref{sec:practical} produces prediction sets for both
$\Ao(Z, \tau)$ and $\Az(Z, \tau)$. Combining the two yields a
confidence set for the $\ATE$ via the
constraint~(\ref{eqn:sum-constraint}).

Proposition~1 of \citet{rigdon-2015-attributable-effect} governs the
combination: if $\{L_1, L_1 + 1, \ldots, U_1\}$ is a $1 - \alpha/2$
prediction set for $\Ao$ and $\{L_0, L_0 + 1, \ldots, U_0\}$ is a
$1 - \alpha/2$ prediction set for $\Az$, then
\begin{displaymath}
  \left\{ \frac{L_1 + L_0}{2S},\;
          \frac{L_1 + L_0 + 1}{2S},\;
          \ldots,\;
          \frac{U_1 + U_0}{2S} \right\}
\end{displaymath}
is a $1 - \alpha$ confidence set for the $\ATE$. The proof is a
Bonferroni argument exploiting~(\ref{eqn:sum-constraint}): the joint
coverage of two $1 - \alpha/2$ prediction sets is at least
$1 - \alpha$. The linear constraint pins the combined endpoints
to the extremes of the two prediction sets.

Combining two $1 - \alpha$ prediction sets without the half-$\alpha$
correction does not guarantee $1 - \alpha$ coverage, which is
addressed by the Bonferroni adjustment. Rigdon and Hudgens exhibit a
small finite-population example ($m = 4$ of $n = 9$ treated) in which
the naive combination achieves only $92\%$ coverage where $95\%$ was
nominally claimed.

\paragraph{End-to-end procedure.}
\begin{enumerate}
  \item Build the two-sided $1 - \alpha/2$ prediction set
    $[L_1, U_1]$ for $\Ao$ by binary search at level $\alpha/4$ on
    each side.
  \item Build the two-sided $1 - \alpha/2$ prediction set
    $[L_0, U_0]$ for $\Az$ by the mirror procedure at level $\alpha/4$
    on each side.
  \item Form the $\ATE$ confidence set
    $\bigl[(L_1 + L_0)/(2S),\, (U_1 + U_0)/(2S)\bigr]$.
\end{enumerate}
The total computational cost is four binary searches, i.e., $O(\log S)$ Binomial
tail calculations.

\paragraph{Example, continued.}
For a $1 - \alpha = 0.90$ confidence set on the $\ATE$ we need
$1 - \alpha/2 = 0.95$ two-sided prediction sets for $\Ao$ and $\Az$.
These are the intersection of two one-sided intervals at level
$\alpha/4 = 0.025$, half the per-side level used in
\S\ref{sec:practical}'s example. Applying that procedure at the
stricter level (for $\Az$, under the swapped box
ceilings~(\ref{eqn:a0-box})) gives $\Ao \in [6, 70]$ and
$\Az \in [5, 79]$. Combining,
\begin{displaymath}
   \ATE \in
   \left[\frac{L_1 + L_0}{2S},\, \frac{U_1 + U_0}{2S}\right]
   = \left[\frac{11}{2000},\, \frac{149}{2000}\right]
   = [0.0055,\, 0.0745]
\end{displaymath}
is a 90\% confidence set for the $\ATE$. It brackets the
randomization-based point estimate
$\widehat{\ATE} = (\Soz - \Szo)/S = 0.040$ symmetrically, with
half-width $0.0345$ on each side. The symmetry is structural: the
$\Ao$ prediction set $[6, 70]$ leans low about its pivot ($34$ below,
$30$ above) and the $\Az$ prediction set $[5, 79]$ leans high
($35$ below, $39$ above). The asymmetries cancel under the sum
$L_1 + L_0$ versus $U_1 + U_0$, as the swapped boxes guarantee.

\paragraph{Large-sample $\ATE$ confidence interval.}
Combining the Gaussian approximations of \S\ref{sec:practical}
(equation~(\ref{eqn:LU-approx}) and its $\Az$ mirror)
via the Bonferroni proposition above yields the closed-form $\ATE$
confidence interval
\begin{align}
   \widehat{\ATE} \,\pm\, \frac{\zalphaq}{2S}
   \Bigl(\sqrt{2 \min(\Soz+\Szz,\, \Szo+\Soo) + \tfrac{1}{4} \zalphaq^2} \hspace{20pt} \nonumber \\
       + \sqrt{2 \min(\Soz+\Soo,\, \Szo+\Szz) + \tfrac{1}{4} \zalphaq^2}\Bigr), \label{eqn:lsate}
\end{align}
where $\widehat{\ATE} = (\Soz - \Szo)/S$ and $\zalphaq$ is
the per-side critical value after the $\alpha \to \alpha/4$ Bonferroni
adjustment (one factor of two for two-sided, one for the $\Ao$/$\Az$
combination). For the running example the binding ceilings are
$\min(\Soz+\Szz, \Szo+\Soo) = 130$ and
$\min(\Soz+\Soo, \Szo+\Szz) = 170$,
and $\znf = 1.96$, giving a half-width of
$(1.96/2000) \bigl(\sqrt{2 \cdot 130 + \tfrac{1}{4}(1.96)^2}
+ \sqrt{2 \cdot 170 + \tfrac{1}{4}(1.96)^2}\bigr) \approx 0.0339$
at $1 - \alpha = 0.90$, in close agreement with the binary-search
half-width $0.0345$ above. The continuity corrections of
\S\ref{sec:practical} add an $O(1/S)$ widening to each combined
endpoint and are omitted here.

\paragraph{Point estimate of the ATE.}
Since $ \ATE = (\Ao + \Az) / (2 S) $ by Equation~(\ref{eqn:sum-constraint}),
and since the point estimates for $ \Ao $ and $ \Az $ are both
$ \Soz - \Szo, $ the point estimate for $ \ATE $ is
\begin{equation}\label{eqn:hl-ate}
   \widehat{\ATE} = \frac{\hat A}{S} = \frac{\Soz - \Szo}{S}.
\end{equation}
Again, this is the standard estimator, but here motivated by random
assignment rather than sampling from a population.

\paragraph{Comparison with the textbook McNemar interval.}
The standard McNemar treatment, e.g.\ \citep{fleiss-2013-proportions},
models the four cell counts $\{\Sjk\}$ as a multinomial sample from
a population of paired binary outcomes and targets the difference of
marginal success probabilities $\Delta := p_{1+} - p_{+1}$.
Its point estimate is $\hat\Delta = (\Soz - \Szo)/S$, identical to
$\widehat{\ATE}$ of~(\ref{eqn:hl-ate}), and its $1 - \alpha$
confidence interval is
$\hat\Delta \pm \frac{\zalphah}{S}\sqrt{\Szo + \Soz - (\Soz - \Szo)^2 / S}$,
with half-width scaling with the discordant-pair count. The two
procedures share the point estimate but rest on different foundations:
\begin{itemize}
\item \emph{Inferential target and coverage.} McNemar targets the
  marginal probabilities of a population, real or imagined, with
  coverage valid in repeated multinomial sampling and contingent on
  the population model. The present procedure targets the
  finite-population $\ATE$ for the $2S$ units actually observed, with
  coverage valid in repeated within-pair coin flips and contingent
  only on the pair-level assignment being random.
\item \emph{Randomization inference analyzes potential outcomes.} A
  sharp null fixes the unit-level effects $\tau_{si}^0$ and so imputes
  the missing control outcomes for the treated units. When asserting a
  net effect of $\az$ (and assuming monotonic treatment effects), the
  worst-case pattern posits $\az$ treated successes whose
  counterfactual was a failure. Each such imputation turns a would-be
  concordant pair discordant: the reference distribution is built on
  the \emph{null-implied} discordant count $\Szo + \Soz + \az$.
  The population null instead constrains only the marginal parameter
  $\Delta$, and because $\sum\sb{jk} p_{jk} = 1$ pins the discordant
  total to the observed concordant cells, it can \emph{rearrange}
  discordant pairs but never create them; its variance stays near the
  observed $\Szo + \Soz$ for every $\Deltaz$. The two coincide at
  $\az = 0$ and diverge toward the endpoints. This is a design-based
  feature, not a choice of variance estimator.
\item \emph{The role of monotonicity.} The count $\Szo + \Soz + \az$
  above assumes monotonic treatment effects ($\tau_{si} \geq 0$).
  Relaxing monotonicity lets a prevented success live in a concordant
  pair, so the worst-case denominator inflates further, to
  $2 b^\star + \deltaz$. In our example, this roughly doubles the
  binding box ceiling ($300 - \az$ versus $100 + \az$).
\item \emph{Bonferroni adjustment.} Combining the prediction sets for
  $\Ao$ and $\Az$ requires using $\zalphaq$ instead of
  $\zalphah,$ inflating confidence widths by about $19\%$ in
  this example. Yet using $\zalphah$ in~(\ref{eqn:lsate})
  sabotages the coverage of our interval.
\item \emph{Width.} The three effects multiply. At a common $90\%$
  level, the null acting on potential outcomes inflates the half-width
  by about $20\%$, relaxing monotonicity by a further $\approx 46\%$,
  and the Bonferroni upgrade by another $\approx 19\%$, so our
  half-width ($0.0339$) is about $1.20 \times 1.46 \times 1.19 \approx 2.1$
  times as wide as the textbook McNemar's ($1.645\sqrt{98.4}/1000 \approx 0.0163$).
\end{itemize}

The trade-off is the standard design-based vs.\ model-based dichotomy:
inference about causal effects vs.\ population parameters.

\section{Sensitivity analysis}\label{sec:sensitivity}
The procedure extends to observational studies via the sensitivity
model of \citet{rosenbaum-1987-sensitivity-analysis-matched-pairs}.
For each pair, let $\pis := \Pr\{Z_{s1} = 1 \mid Z_{s1} + Z_{s2} = 1\}$
be the probability that, given one unit of the pair is treated, it is
unit $1$. In the population before matching $\Pr\{Z_{si} = 1\}$ may
not be anything close to $\tfrac{1}{2}$; but if the two units of a
pair are equally likely to select treatment, then $\pis = \tfrac{1}{2}$,
recovering the within-pair coin flip
\citep[\S3.2]{rosenbaum-2020-design-of-observational-studies}. Under
unmeasured confounding bounded by $\Gamma \geq 1$, the standard
matched-pair sensitivity model assumes
\begin{equation}\label{eqn:sensitivity}
   \frac{1}{\Gamma + 1} \leq \pis \leq \frac{\Gamma}{\Gamma + 1}
   \qquad \text{for all } s.
\end{equation}
Under~(\ref{eqn:sensitivity}) the McNemar statistic is no longer a
binomial under the null hypothesis, but rather a sum of independent
Bernoulli variables whose success probabilities $\pis$ are unknown
but assumed to lie within the interval~(\ref{eqn:sensitivity}).

While the test statistic has an unknown distribution under the null
hypothesis, it is stochastically bounded by Binomial distributions
with probabilities of success corresponding to the bounds
in~(\ref{eqn:sensitivity}). For example, when testing a pattern
$ (\mpzo, \mpoo) $ consistent with $ \Hz: \Ao \leq \az $, the p-value
is bounded above by $ \Pr\left\{\Binom\left(\mpzo + \mpoo, \frac{\Gamma}{\Gamma+1}\right) \geq \mpzo\right\} $.
If the supremum of this quantity (over all consistent patterns) is
less than the significance threshold, we know the supremum of the
(unknown) p-values is also less than the threshold, and we reject
$ \Hz $ despite not knowing the actual p-value.

Lemmas~\ref{lem:mono-a} and~\ref{lem:mono-b} generalize and lead to
the same worst-case pattern $ (a^\star, b^\star) $ as summarized in
Table~\ref{tbl:worst-case}. (The $\tfrac{1}{4} \cdot \tfrac{\deltaz - 1}{b + \deltaz}$
in Lemma~\ref{lem:mono-b} becomes
$\tfrac{\Gamma}{(\Gamma + 1)^2} \cdot \tfrac{\Gamma \deltaz + (\Gamma - 1) b - 1}{b + \deltaz}$,
which is still positive when $ \deltaz \geq 2 $ for all $ \Gamma \geq 1 $.)

When testing $ \Hz : A_{0/1} \leq \az $, a bound on the worst-case p-value is
$ \Pr\left\{\Binom\left(a^\star + b^\star, \tfrac{\Gamma}{\Gamma + 1}\right) \geq a^\star \right\}$.
When testing $ \Hz^\prime : A_{0/1} \geq \az $, a bound on the
worst-case p-value is
$ \Pr\left\{\Binom\left(a^\star + b^\star, \tfrac{1}{\Gamma + 1}\right) \leq a^\star \right\}$.
Doubling the smaller of these bounds gives a bound on the two-sided
p-value. Expanded prediction sets consist of the points this procedure
does not reject. We may combine these prediction sets to form an
expanded confidence interval on the $\ATE$. The practical consequence:
we may compute a $\Gamma$-sensitivity confidence interval for the
$\ATE$ in $O(\log S)$ Binomial tail calculations.

\paragraph{Hodges-Lehmann point estimate under sensitivity.}
The Hodges-Lehmann strategy singles out the point equating the test
statistic to its null expectation, but under~(\ref{eqn:sensitivity}),
the test statistic has an unknown null expectation bounded between
$(1/(\Gamma + 1)) (a^\star + b^\star)$ and
$(\Gamma/(\Gamma + 1)) (a^\star + b^\star)$. The unknown
Hodges-Lehmann point estimate is therefore bounded between the roots
of $ a^\star = (1/(\Gamma + 1)) (a^\star + b^\star)$ and
$ a^\star = (\Gamma/(\Gamma + 1)) (a^\star + b^\star)$.

Consider first $ a^\star = (1/(\Gamma + 1)) (a^\star + b^\star)$, and
suppose
$ \Gamma \leq (\Szo + \Soo) / (\Szz + \Soz) $. We will show
$ \az = \Gamma \Soz - \Szo + (\Gamma - 1) \Szz $ is such a
root. First we show
$ a^\star = \min(\Szz + \Soz, \Soz + \Soo - \az) = \Szz + \Soz $.
We have
\begin{align}
  \Soz + \Soo - \az &= \Soo + \Szo - (\Gamma - 1) (\Szz + \Soz) \nonumber \\
     &\geq \Szz + \Soz, \nonumber
\end{align}
with the second line following from the bound on $ \Gamma $.
Next we show
$ b^\star = \min(\Szo + \Soo, \Szz + \Szo + \az) = \Szz + \Szo + \az $.
We have $ \Szz + \Szo + \az = \Gamma (\Szz + \Soz) \leq \Szo + \Soo$.
Finally,
\begin{align}
  \frac{1}{\Gamma + 1} (a^\star + b^\star) &= \frac{1}{\Gamma + 1}(\Szz + \Soz + \Szz + \Szo + \az) \nonumber \\
  &= \frac{1}{\Gamma + 1} [(\Gamma + 1) (\Szz + \Soz)] \nonumber \\
  &= \Szz + \Soz \nonumber \\
  &= a^\star. \nonumber
\end{align}

Now suppose $ \Gamma \geq (\Szo + \Soo) / (\Szz + \Soz) $. We will
show $ \az = \Soz - (1/\Gamma) \Szo + ((\Gamma - 1)/\Gamma) \Soo $
is the corresponding root. The argument parallels the previous. We have:
\begin{align}
  \Soz + \Soo - \az &= (\Szo + \Soo) / \Gamma \nonumber \\
   &\leq \Szz + \Soz, \nonumber
\end{align}
so $ a^\star = \Soz + \Soo - \az $. We have
\begin{align}
  \Szz + \Szo + \az &= \Szz + \Soz + \frac{\Gamma - 1}{\Gamma} (\Szo + \Soo) \nonumber \\
  &\geq \Szz + \Soz + \left(1 - \frac{\Szz + \Soz}{\Szo + \Soo} \right) (\Szo + \Soo) \nonumber \\
  &= \Szo + \Soo, \nonumber
\end{align}
so $ b^\star = \Szo + \Soo $. Finally,
\begin{align}
  \frac{1}{\Gamma + 1} (a^\star + b^\star)
     &= \frac{1}{\Gamma + 1} \left( \frac{1}{\Gamma}(\Szo + \Soo) + \Szo + \Soo \right) \nonumber \\
     &= \frac{1}{\Gamma + 1} \cdot \frac{\Gamma + 1}{\Gamma} (\Szo + \Soo) \nonumber \\
     &= \frac{1}{\Gamma}(\Szo + \Soo) \nonumber \\
     &= a^\star. \nonumber
\end{align}

Thus, one bound (the upper as we'll see) on the Hodges-Lehmann point estimate is:
\begin{displaymath}
  \mathrm{ub}_1 = \begin{cases}
    \Gamma \Soz - \Szo + (\Gamma - 1) \Szz & \Gamma \leq \frac{\Szo + \Soo}{\Szz + \Soz} \\
    \Soz - \frac{1}{\Gamma} \Szo + \frac{\Gamma - 1}{\Gamma} \Soo & \Gamma \geq \frac{\Szo + \Soo}{\Szz + \Soz}.
  \end{cases}
\end{displaymath}
As $\Gamma \to \infty$, $ \mathrm{ub}_1 \to \Soz + \Soo $,
the number of treated successes and the a priori upper bound on the
net effect.

Turning now to the other bound, we seek the root of
$ a^\star = (\Gamma/(\Gamma + 1)) (a^\star + b^\star)$.
Suppose
$ \Gamma \leq (\Szz + \Soz) / (\Szo + \Soo) $. We will show
$ \az = \Soz - \Gamma \Szo - (\Gamma - 1) \Soo $ is such a
root. First we show
$ a^\star = \min(\Szz + \Soz, \Soz + \Soo - \az) = \Soz + \Soo - \az $.
We have
\begin{align}
  \Soz + \Soo - \az &= \Gamma (\Szo + \Soo) \nonumber \\
     &\leq \Szz + \Soz, \nonumber
\end{align}
with the second line following from the bound on $ \Gamma $.
Next we show
$ b^\star = \min(\Szo + \Soo, \Szz + \Szo + \az) = \Szo + \Soo $.
We have
\begin{align}
  \Szz + \Szo + \az &= \Szz + \Soz - (\Gamma - 1)(\Szo + \Soo) \nonumber \\
     &\geq \Szo + \Soo, \nonumber
\end{align}
again from the bound on $ \Gamma $. Finally,
\begin{align}
  \frac{\Gamma}{\Gamma + 1} (a^\star + b^\star)
     &= \frac{\Gamma}{\Gamma + 1} (\Gamma (\Szo + \Soo) + \Szo + \Soo) \nonumber \\
     &= \frac{\Gamma}{\Gamma + 1} (\Gamma + 1)(\Szo + \Soo) \nonumber \\
     &= \Gamma (\Szo + \Soo) \nonumber \\
     &= a^\star. \nonumber
\end{align}

Now suppose $ \Gamma \geq (\Szz + \Soz) / (\Szo + \Soo) $. We will
show $ \az = (1/\Gamma) \Soz - \Szo - ((\Gamma - 1)/\Gamma) \Szz $
is the corresponding root. The argument parallels the previous. First we
show
$ a^\star = \min(\Szz + \Soz, \Soz + \Soo - \az) = \Szz + \Soz $.
We have
\begin{align}
  \Soz + \Soo - \az &= (\Szo + \Soo) + \frac{\Gamma - 1}{\Gamma} (\Szz + \Soz) \nonumber \\
     &\geq \Szz + \Soz, \nonumber
\end{align}
with the second line following from the bound on $ \Gamma $.
Next we show
$ b^\star = \min(\Szo + \Soo, \Szz + \Szo + \az) = \Szz + \Szo + \az $.
We have $ \Szz + \Szo + \az = \frac{1}{\Gamma} (\Szz + \Soz) \leq \Szo + \Soo$.
Finally,
\begin{align}
  \frac{\Gamma}{\Gamma + 1} (a^\star + b^\star)
     &= \frac{\Gamma}{\Gamma + 1} \left( \Szz + \Soz + \frac{1}{\Gamma}(\Szz + \Soz) \right) \nonumber \\
     &= \frac{\Gamma}{\Gamma + 1} \cdot \frac{\Gamma + 1}{\Gamma} (\Szz + \Soz) \nonumber \\
     &= \Szz + \Soz \nonumber \\
     &= a^\star. \nonumber
\end{align}

Thus, the lower bound on the Hodges-Lehmann point estimate is:
\begin{displaymath}
  \mathrm{lb}_1 = \begin{cases}
    \Soz - \Gamma \Szo - (\Gamma - 1) \Soo & \Gamma \leq \frac{\Szz + \Soz}{\Szo + \Soo} \\
    \frac{1}{\Gamma} \Soz - \Szo - \frac{\Gamma - 1}{\Gamma} \Szz & \Gamma \geq \frac{\Szz + \Soz}{\Szo + \Soo},
  \end{cases}
\end{displaymath}
which goes to $ -(\Szz + \Szo) $ as $ \Gamma \to \infty, $
the negative of the number of treated failures and the a priori lower
bound on the net effect.

The corresponding bounds for $\Az$ follow from the $\Ao$ derivation by
exchanging the roles of $\Szz$ and $\Soo$ throughout, leaving
$\hat A = \Soz - \Szo$ unchanged. Under the monotonicity assumption
the same root calculations go through with the reduced box ceilings of
\S\ref{sec:hardest} (for $\Ao$, the right-face ceiling
$\Szz + \Soz$ becomes $\Soz$; for $\Az$, $\Soz + \Soo$
becomes $\Soz$), with one new feature. Monotonicity forces
$\mpo = \Szo + \moo \geq \Szo$, so the worst-case $b^\star$
cannot fall below $\Szo$. The upper bounds are unaffected (they
still saturate at the a priori maxima $s_T = \Soz + \Soo$ and
$f_C = \Szz + \Soz$) but the lower bounds, driven toward small
$b^\star$ as $\Gamma$ grows, meet this floor: each saturates exactly at
the a priori minimum $0$ once $\Gamma \geq \Soz / \Szo$, rather than
continuing to the non-monotonic limits $-f_T$ and $-s_C$.
Table~\ref{tbl:sensitivity} collects all four estimand/assumption
combinations, writing $f_T, s_T$ ($f_C, s_C$) for the numbers of
treated (control) failures and successes.

Combining the non-monotonic $\Ao$ and $\Az$ bounds of
Table~\ref{tbl:sensitivity} through the sum
constraint~(\ref{eqn:sum-constraint}), the $\ATE$ point estimate under
sensitivity lies in
\begin{displaymath}
  \left[ \frac{\mathrm{lb}_1 + \mathrm{lb}_0}{2S}, \;
         \frac{\mathrm{ub}_1 + \mathrm{ub}_0}{2S} \right],
\end{displaymath}
which widens monotonically with $\Gamma$ and saturates as
$\Gamma \to \infty$ at
\begin{displaymath}
  \frac{[-(\Szz + 2 \Szo + \Soo),\; \Szz + 2 \Soz + \Soo]}{2S}.
\end{displaymath}
We may write these endpoints as
$ \tfrac{\hat{A}}{2S} \pm \tfrac{1}{2} $. This is the partial
identification interval of \citet{manski-2003-partial-identification}.
The $\Gamma$-model thus interpolates continuously between the
randomization interval at $\Gamma = 1$ and the assumption-free Manski
region as $\Gamma \to \infty$.

\begin{table}[p]
\centering
\footnotesize
\setlength{\tabcolsep}{5pt}
\caption{Bounds on the Hodges-Lehmann point estimate under the
$\Gamma$-sensitivity model, by estimand and monotonicity assumption.
Each cell gives the bound as a function of $\Gamma$ (a two-case
closed form) together with its $\Gamma \to \infty$ limit. The $\Az$
bounds follow from the $\Ao$ derivation by exchanging $\Szz$ and
$\Soo$. The monotonic bounds use the reduced box ceilings of
\S\ref{sec:hardest}. Here $f_T = \Szz + \Szo$ and
$s_T = \Soz + \Soo$ are the numbers of treated failures and
successes, $f_C = \Szz + \Soz$ and $s_C = \Szo + \Soo$ the
control counts. Under monotonicity the lower bounds saturate exactly
at the a priori minimum $0$ once $\Gamma \geq \Soz / \Szo$.}
\label{tbl:sensitivity}
\begin{tabular}{c c l l}
\toprule
Estimand & \shortstack{Monotonic\\ Treatment\\ Effects?} & Lower Bound & Upper Bound \\
\midrule
\multirow{2}{*}{$\Ao$} & N
 & $\begin{array}{@{}l@{}}
     \begin{cases}
       \Soz - \Gamma \Szo - (\Gamma-1)\Soo & \Gamma \leq \tfrac{\Szz+\Soz}{\Szo+\Soo} \\[3pt]
       \tfrac{1}{\Gamma}\Soz - \Szo - \tfrac{\Gamma-1}{\Gamma}\Szz & \Gamma \geq \tfrac{\Szz+\Soz}{\Szo+\Soo}
     \end{cases} \\[6pt]
     \quad \to\, -f_T
   \end{array}$
 & $\begin{array}{@{}l@{}}
     \begin{cases}
       \Gamma \Soz - \Szo + (\Gamma-1)\Szz & \Gamma \leq \tfrac{\Szo+\Soo}{\Szz+\Soz} \\[3pt]
       \Soz - \tfrac{1}{\Gamma}\Szo + \tfrac{\Gamma-1}{\Gamma}\Soo & \Gamma \geq \tfrac{\Szo+\Soo}{\Szz+\Soz}
     \end{cases} \\[6pt]
     \quad \to\, s_T
   \end{array}$ \\
\cmidrule(l){2-4}
 & Y
 & $\begin{array}{@{}l@{}}
     \begin{cases}
       \Soz - \Gamma \Szo - (\Gamma-1)\Soo & \Gamma \leq \tfrac{\Soz}{\Szo+\Soo} \\[3pt]
       \tfrac{1}{\Gamma}\Soz - \Szo & \tfrac{\Soz}{\Szo+\Soo} \leq \Gamma \leq \tfrac{\Soz}{\Szo} \\[3pt]
       0 & \Gamma \geq \tfrac{\Soz}{\Szo}
     \end{cases}
   \end{array}$
 & $\begin{array}{@{}l@{}}
     \begin{cases}
       \Gamma \Soz - \Szo & \Gamma \leq \tfrac{\Szo+\Soo}{\Soz} \\[3pt]
       \Soz - \tfrac{1}{\Gamma}\Szo + \tfrac{\Gamma-1}{\Gamma}\Soo & \Gamma \geq \tfrac{\Szo+\Soo}{\Soz}
     \end{cases} \\[6pt]
     \quad \to\, s_T
   \end{array}$ \\
\midrule
\multirow{2}{*}{$\Az$} & N
 & $\begin{array}{@{}l@{}}
     \begin{cases}
       \Soz - \Gamma \Szo - (\Gamma-1)\Szz & \Gamma \leq \tfrac{\Soz+\Soo}{\Szz+\Szo} \\[3pt]
       \tfrac{1}{\Gamma}\Soz - \Szo - \tfrac{\Gamma-1}{\Gamma}\Soo & \Gamma \geq \tfrac{\Soz+\Soo}{\Szz+\Szo}
     \end{cases} \\[6pt]
     \quad \to\, -s_C
   \end{array}$
 & $\begin{array}{@{}l@{}}
     \begin{cases}
       \Gamma \Soz - \Szo + (\Gamma-1)\Soo & \Gamma \leq \tfrac{\Szz+\Szo}{\Soz+\Soo} \\[3pt]
       \Soz - \tfrac{1}{\Gamma}\Szo + \tfrac{\Gamma-1}{\Gamma}\Szz & \Gamma \geq \tfrac{\Szz+\Szo}{\Soz+\Soo}
     \end{cases} \\[6pt]
     \quad \to\, f_C
   \end{array}$ \\
\cmidrule(l){2-4}
 & Y
 & $\begin{array}{@{}l@{}}
     \begin{cases}
       \Soz - \Gamma \Szo - (\Gamma-1)\Szz & \Gamma \leq \tfrac{\Soz}{\Szz+\Szo} \\[3pt]
       \tfrac{1}{\Gamma}\Soz - \Szo & \tfrac{\Soz}{\Szz+\Szo} \leq \Gamma \leq \tfrac{\Soz}{\Szo} \\[3pt]
       0 & \Gamma \geq \tfrac{\Soz}{\Szo}
     \end{cases}
   \end{array}$
 & $\begin{array}{@{}l@{}}
     \begin{cases}
       \Gamma \Soz - \Szo & \Gamma \leq \tfrac{\Szz+\Szo}{\Soz} \\[3pt]
       \Soz - \tfrac{1}{\Gamma}\Szo + \tfrac{\Gamma-1}{\Gamma}\Szz & \Gamma \geq \tfrac{\Szz+\Szo}{\Soz}
     \end{cases} \\[6pt]
     \quad \to\, f_C
   \end{array}$ \\
\bottomrule
\end{tabular}
\end{table}

\paragraph{Large-sample sensitivity intervals.}
The Gaussian approximation of \S\ref{sec:practical} extends to the
sensitivity model with a single change. The worst-case corner
$(a^\star, b^\star)$ is exactly that of \S\ref{sec:hardest}:
only the reference distribution moves, from
$\Binom(a^\star + b^\star, \tfrac{1}{2})$ to
$\Binom(a^\star + b^\star, \pi)$ with $\pi = \Gamma/(\Gamma+1)$ for the
right-tailed (lower-endpoint) test and $\pi = 1/(\Gamma+1)$ for the
left-tailed. Standardizing the worst-case statistic against its
shifted null mean $\pi n$ and variance $\pi(1-\pi) n$, where
$n := a^\star + b^\star$, with the continuity correction
$a^\star \mapsto a^\star - \tfrac{1}{2}$ (right tail),
\begin{equation}\label{eqn:Z-sensitivity}
   Z = \frac{a^\star - \tfrac{1}{2} - \pi n}{\sqrt{\pi(1-\pi)\, n}}
     = \frac{(\Gamma+1)(\deltaz - 1) - (\Gamma-1)\, n}{2\sqrt{\Gamma\, n}},
\end{equation}
which recovers the standardization of \S\ref{sec:practical},
$(\deltaz - 1)/\sqrt{2 b^\star + \deltaz}$, at $\Gamma = 1$.

Setting $Z = \zalpha$ and substituting the face-dependent count
($n = 2(\Szo + \Soo) + \deltaz$ where the consistency line
meets the top face, $n = 2(\Szz + \Soz) - \deltaz$ where it meets
the right face) gives in each case a quadratic in $\sqrt{n}$ whose
positive root is elementary. Writing $z := \zalpha$ and collecting
terms, every endpoint is $\hat A$ plus or minus one of two kernels,
\begin{align}
   M(C) &:= (\Gamma-1) C + \frac{\Gamma+1}{2} + \frac{\Gamma}{2} z^2
          + z \sqrt{\Gamma(\Gamma+1)\bigl(C + \tfrac{1}{2}\bigr)
                    + \tfrac{1}{4} \Gamma^2 z^2}, \label{eqn:kernel-M} \\
   R(C) &:= \frac{\Gamma-1}{\Gamma} C + \frac{\Gamma+1}{2\Gamma}
          - \frac{z^2}{2\Gamma}
          + \frac{z}{\Gamma} \sqrt{(\Gamma+1)\bigl(C - \tfrac{1}{2}\bigr)
                    + \tfrac{1}{4} z^2}, \label{eqn:kernel-R}
\end{align}
evaluated at the relevant box ceiling $C$. The lower endpoint is
$\Lalpha = \hat A - M(c)$ when the consistency line meets the top face
and $\hat A - R(d)$ when it meets the right face; the upper endpoint is
$\Ualpha = \hat A + R(c)$ (top face) or $\hat A + M(d)$ (right face).
Here $c$ and $d$ are the top- and right-face ceilings of
\S\ref{sec:hardest}: $c = \Szo + \Soo$ and $d = \Szz + \Soz$
for $\Ao$, the $\Szz \leftrightarrow \Soo$ swap for $\Az$, and the
reduced right ceiling $d = \Soz$ under monotonicity.
Table~\ref{tbl:sensitivity-large-sample} collects all sixteen
endpoints; as in \S\ref{sec:practical} the applicable face is the one
whose endpoint satisfies the stated criterion, and the integer
endpoints are $\Lalpha = \lceil \cdot \rceil$ and
$\Ualpha = \lfloor \cdot \rfloor$.

At $\Gamma = 1$ the two kernels coincide with the continuity-corrected
roots of \S\ref{sec:practical}, and
Table~\ref{tbl:sensitivity-large-sample} reduces to
Table~\ref{tbl:large-sample}. Dropping the continuity correction and
setting $z = 0$ makes $Z = 0$, which is the fixed point
$a^\star = \pi(a^\star + b^\star)$ solved above, so the interval
collapses to the Hodges-Lehmann sensitivity bounds of
Table~\ref{tbl:sensitivity}. The continuity correction now scales as
$(\Gamma+1)/2$ rather than a fixed unit, widening the interval by more
than one count at each end as $\Gamma$ grows. And because the top- and
right-face kernels carry factors of $\Gamma$ and $1/\Gamma$
respectively, the interval is no longer symmetric about $\hat A$:
sensitivity drifts its center along the point-estimate range of
Table~\ref{tbl:sensitivity} while the $z$-terms supply the sampling
half-width, so there is no clean $\hat A \pm$ half-width form like
(\ref{eqn:LU-approx}). Under monotonicity the lower endpoints
additionally saturate at the a-priori minimum $0$ once
$\Gamma \geq \Soz/\Szo$, exactly as the point-estimate bounds do
(Table~\ref{tbl:sensitivity}).

\begin{table}[p]
\centering
\footnotesize
\setlength{\tabcolsep}{6pt}
\caption{Large-sample Gaussian approximations to the prediction-set
endpoints under the $\Gamma$-sensitivity model, by estimand,
monotonicity assumption, endpoint (lower $\Lalpha$ or upper
$\Ualpha$), and which face of the constraint box the consistency line
meets: top ($T$) or right ($R$). Each entry uses the kernels
$M(\cdot)$ and $R(\cdot)$ of (\ref{eqn:kernel-M})--(\ref{eqn:kernel-R})
with $z := \zalpha$; the applicable face is the one whose endpoint
satisfies the criterion in the final column, and the integer endpoints
are $\Lalpha = \lceil \cdot \rceil$, $\Ualpha = \lfloor \cdot
\rfloor$. Setting $\Gamma = 1$ recovers Table~\ref{tbl:large-sample}.
Under monotonicity the lower endpoints saturate at the a-priori minimum
$0$ once $\Gamma \geq \Soz/\Szo$.}
\label{tbl:sensitivity-large-sample}
\begin{tabular}{c c l c}
\toprule
Estimand & \shortstack{Monotonic\\ Treatment\\ Effects?} & Formula & Criterion \\
\midrule
\multirow{8}{*}{$\Ao$}
 & \multirow{4}{*}{N}
   & $\Lalpha^T = \hat A - M(\Szo+\Soo)$ & $\Lalpha^T \geq \Soo-\Szz$ \\
 & & $\Lalpha^R = \hat A - R(\Szz+\Soz)$ & $\Lalpha^R < \Soo-\Szz$ \\
 & & $\Ualpha^T = \hat A + R(\Szo+\Soo)$ & $\Ualpha^T \geq \Soo-\Szz$ \\
 & & $\Ualpha^R = \hat A + M(\Szz+\Soz)$ & $\Ualpha^R < \Soo-\Szz$ \\
\cmidrule(l){2-4}
 & \multirow{4}{*}{Y}
   & $\Lalpha^T = \hat A - M(\Szo+\Soo)$ & $\Lalpha^T \geq \Soo$ \\
 & & $\Lalpha^R = \hat A - R(\Soz)$ & $\Lalpha^R < \Soo$ \\
 & & $\Ualpha^T = \hat A + R(\Szo+\Soo)$ & $\Ualpha^T \geq \Soo$ \\
 & & $\Ualpha^R = \hat A + M(\Soz)$ & $\Ualpha^R < \Soo$ \\
\midrule
\multirow{8}{*}{$\Az$}
 & \multirow{4}{*}{N}
   & $\Lalpha^T = \hat A - M(\Szz+\Szo)$ & $\Lalpha^T \geq \Szz-\Soo$ \\
 & & $\Lalpha^R = \hat A - R(\Soz+\Soo)$ & $\Lalpha^R < \Szz-\Soo$ \\
 & & $\Ualpha^T = \hat A + R(\Szz+\Szo)$ & $\Ualpha^T \geq \Szz-\Soo$ \\
 & & $\Ualpha^R = \hat A + M(\Soz+\Soo)$ & $\Ualpha^R < \Szz-\Soo$ \\
\cmidrule(l){2-4}
 & \multirow{4}{*}{Y}
   & $\Lalpha^T = \hat A - M(\Szz+\Szo)$ & $\Lalpha^T \geq \Szz$ \\
 & & $\Lalpha^R = \hat A - R(\Soz)$ & $\Lalpha^R < \Szz$ \\
 & & $\Ualpha^T = \hat A + R(\Szz+\Szo)$ & $\Ualpha^T \geq \Szz$ \\
 & & $\Ualpha^R = \hat A + M(\Soz)$ & $\Ualpha^R < \Szz$ \\
\bottomrule
\end{tabular}
\end{table}

Combining the $\Ao$ and $\Az$ endpoints at per-side level $\alpha/4$
yields the large-sample $\Gamma$-sensitivity confidence interval for
the $\ATE$,
\begin{displaymath}
   \left[ \frac{L_1 + L_0}{2S}, \; \frac{U_1 + U_0}{2S} \right].
\end{displaymath}
Read $(L_1, U_1)$ and $(L_0, U_0)$ from
Table~\ref{tbl:sensitivity-large-sample} for $\Ao$ and $\Az$ at
$z = \zalphaq$. For the running example at
$\Gamma = 1$ this reproduces the exact interval of \S\ref{sec:ate},
$[0.0055, 0.0745]$; at $\Gamma = 1.25$ it widens and drifts to
$[-0.029, 0.111]$, the asymmetry reflecting the sensitivity-induced
shift of the center.

\paragraph{The sensitivity value.}
While the amount of hidden bias is typically unknown, we may calculate
the point at which we begin to question a study's conclusions. This
occurs at the smallest $\Gamma$ at which the study fails to reject the
null of no positive net effect at level $\alpha$. Call this threshold
$ \Gamma^\bullet $ the \emph{sensitivity value}
\citep{zhao-2019-sensitivity}. Debates over study conclusions can then
focus on whether the amount of hidden bias is below or above this
threshold, how we might detect such biases, and how we might design
studies to have correspondingly larger thresholds. A finding with
$\Gamma^\bullet$ well above $1$ is robust to substantial unmeasured
confounding; $\Gamma^\bullet$ close to $ 1 $ would be cause for
concern in an observational study, because rarely could we be so
confident in our adjustment strategy.

For the running example, testing $\Hz: \ATE \leq 0$ at
$\alpha = 0.10$, the sensitivity value is
$\Gamma^\bullet \approx 1.03$. The positive net effect is thus
sensitive to small hidden biases: an unmeasured confounder shifting
the within-pair odds of treatment by as little as 3--4\% could explain
the effect. Figure~\ref{fig:sensitivity} traces the two intervals of
this section as $\Gamma$ grows: the sensitivity interval quantifying
uncertainty from hidden biases, nested within the expanded confidence
interval that also carries stochastic assignment uncertainty. Both
intervals straddle the Hodges-Lehmann point estimate
$\widehat{\ATE} = 0.04$, with the expanded interval crossing zero at
$\Gamma^\bullet$.

\begin{figure}[t]
\centering
\includegraphics[width=0.82\linewidth]{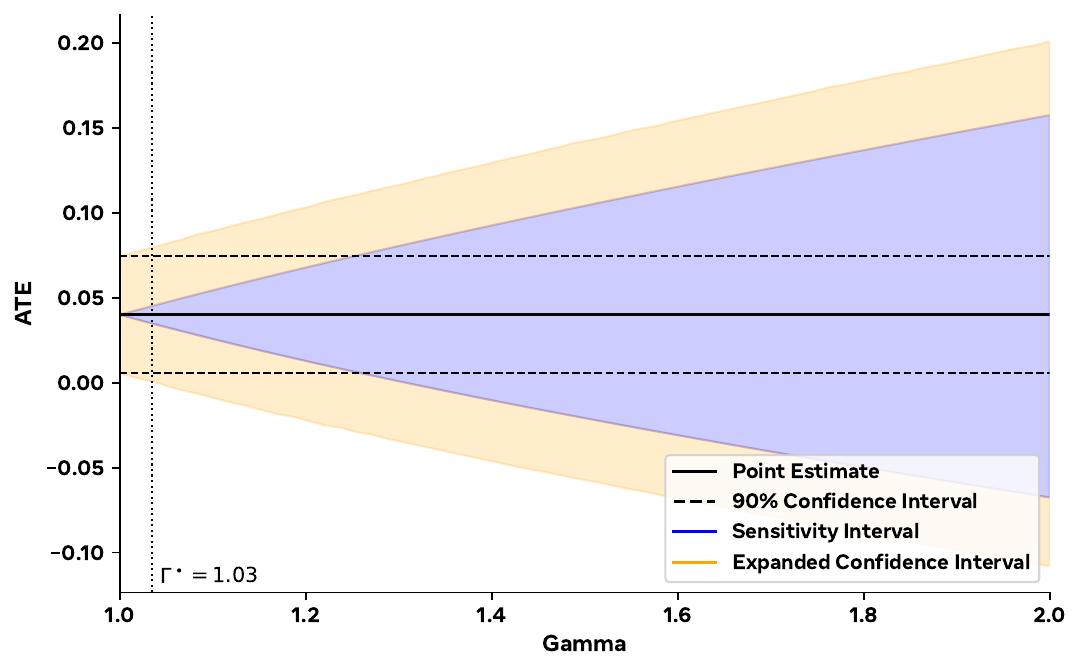}
\caption{Sensitivity of the $\ATE$ for the running example
($\Szz = 800$, $\Szo = 30$, $\Soz = 70$, $\Soo = 100$). As the
hidden-bias odds ratio $\Gamma$ grows from $1$ (randomization), two
intervals widen around the Hodges-Lehmann point estimate
$\widehat{\ATE} = 0.04$ (solid black): the confounding-only
\emph{sensitivity interval} (blue), and the \emph{expanded confidence
interval} (orange) that also carries stochastic uncertainty at
coverage $1 - \alpha = 90\%$. Dashed lines mark the $\Gamma = 1$
confidence interval $[0.0055, 0.0745]$; the dotted vertical line marks
the sensitivity value $\Gamma^\bullet \approx 1.03$, where the
expanded interval first reaches the null $\ATE = 0$.}
\label{fig:sensitivity}
\end{figure}

\paragraph{Design sensitivity.}
The value $\Gamma^\bullet$ mixes two ingredients: sensitivity to bias
and ordinary stochastic noise from the probabilistic assignment
mechanism. A simple argument finds a value of $ \Gamma $ preventing us
from rejecting the hypothesis of 0 net effect at any sample size.

Recall $ a^\star \sim \Binom(a^\star + b^\star, \Gamma / (\Gamma + 1)) $
under $ \Hz $. Suppose $ \Gamma $ is such that $ a^\star $ equalled
its null expectation:
\begin{align}
  a^\star &= \frac{\Gamma}{\Gamma + 1} (a^\star + b^\star) \nonumber \\
  \Rightarrow \Gamma &= \frac{a^\star}{b^\star} \nonumber \\
  &= \frac{ \Szo + \Soo + \deltaz }{\Szo +\Soo} \nonumber \\
  &= \frac{ \Soz + \Soo }{\Szo +\Soo} \nonumber \\
  &= \frac{ p_{1 +} }{ p_{+ 1} } = 1 + \frac{\tau}{p_{+1}} =: \tilde{\Gamma}, \nonumber
\end{align}
where the third line substitutes Equation~(\ref{eqn:worst}), without
assuming monotonicity and assuming the diagonal consistency line
intersects the top of the box. (As discussed in \S\ref{sec:hardest}, this
is the case whenever $ \Soo \leq \Szz $ with $ \az = 0 $, that is,
when failures are more common than successes.) The last line expresses
the threshold as the ratio of the treated success rate to the control
success rate, and as $ 1 $ plus the relative treatment effect compared
to the baseline success rate, $ \tau / p_{+1}, $ where $ \tau = p_{1+} - p_{+1}. $

If the unknown $ \Gamma $ does exceed that threshold, we would not be
able to reject the null hypothesis of 0 net effect at any sample size.
Thus, the \emph{design sensitivity} $ \tilde{\Gamma} $ offers a
valuable yardstick for assessing our hope of detecting an effect. In
an experiment, it is recommended to perform \emph{power analysis} as
part of the design. Power analysis requires we reflect on the success
rates we expect, often based on previous analyses, and the effect
size we hope to detect, telling us the sample size we need. Design
sensitivity serves a similar role in observational studies, telling us
how thorough our adjustment strategy needs to be.

Based on a relative effect size we hope to detect, we must apply our
domain expertise and ensure we measure the most important confounders.
If we lack such ability, there is little hope of success. We may
employ quasi-experimental devices to rule out the possibility of large
hidden biases \citep{rosenbaum-1989-known-effects, rosenbaum-2015-quasi-experiment}.

This discussion demonstrates the hopelessness of measuring small
relative effect sizes in an observational study. No matter our domain
expertise, rarely could we be confident the remaining ``hidden''
biases were minuscule. Thus there is no hope of measuring minuscule
relative treatment effects with confidence in an observational study.

In the running example, the observed $ p_{+ 1} = 0.13 $, and the
observed treatment effect is $ +0.04 $ for a \emph{post-hoc} design
sensitivity of $ 1.31. $ The study performed in this section did not
yield an impressive sensitivity value, but repeating it with a larger
sample size would not help much. Design sensitivity is most useful
\emph{before} we have performed a study; warnings analogous to those
against post-hoc power analysis apply.

When we assume monotonic treatment effects,
$ \tilde{\Gamma} = \Soz / \Szo = 1 + \hat{A} / \Szo = 1 + \tau / p_{01}. $
The smaller denominator of the monotonic case (compared to the
non-monotonic case, $ p_{+ 1} = p_{01} + p_{11} $) aids the design
sensitivity, but the formula lacks the clean interpretation of the
non-monotonic case.

\section{Discussion}\label{sec:discussion}
Textbook discussions of paired studies with binary outcomes often take
a population sampling approach \citep{fleiss-2013-proportions}.
When inference is instead rooted in random assignment, it is often
limited to testing the null hypothesis of no treatment effect for any
unit \citep{lehmann-1975-nonparametrics}. This note has demonstrated
how the randomized treatment assignment provides the ``reasoned basis
for inference'' for effect sizes. This leads to different formulae
than sampling-based derivations and is more defensible when units are
\emph{not} samples from a population, when interest lies in the
effect of treatment on the units in the study, or in an observational
study where hidden biases from unobserved factors may be present.

Previous work assumed monotonicity
\citep{rosenbaum-2002-attributing-effects-to-treatment-in-matched-studies}
or relied on an integer program solver
\citep{rigdon-2015-attributable-effect}. The present note provides
analytical formulae for point estimates, p-values (against arbitrary
null hypotheses), and large-sample approximations for confidence
bounds (exact confidence bounds in $O(\log S)$) based on inverting
tests. It also provides formulae for the interval summarizing
sensitivity of point estimates to hidden biases and expanded
confidence intervals for observational studies.

When we know treatment effects are monotonic, incorporating this
assumption achieves greater statistical power and narrower confidence
intervals. In many applications, this assumption is not defensible,
and our discussion shows it is unnecessary for inference.

\bibliographystyle{plainnat}
\bibliography{citations}

\end{document}